\newcommand{\R}{\mathbb{R}}
\newcommand{\so}{\mathfrak{so}}
\newcommand{\D}{\mathrm{d}}
\newcommand{\E}{\mathrm{e}}
\newcommand{\I}{{\rm i}}
\newcommand{\Tr}{\text{Tr}}
\newcommand{\ctg}{\mathrm{ctg}}
\newcommand{\Su}{\mathrm{SU}(2)}
\newcommand{\su}{\fs\fu(2)}
\newcommand{\SO}{\mathrm{SO}(3)}
\newcommand{\SU}{\mathrm{SU}(2)}
\newcommand{\bsigma}{{\boldsymbol\sigma}}
\newcommand{\btau}{{\boldsymbol\tau}}
\newcommand{\btheta}{{\boldsymbol\theta}}
\def\be{\begin{eqnarray}}
\def\ee{\end{eqnarray}}
\newcommand{\fg}{\mathfrak{g}}
\newcommand{\fs}{\mathfrak{s}}  
\newcommand{\fu}{\mathfrak{u}}
\newcommand{\Ad}{\mathrm{Ad}}
\newcommand{\ad}{\mathrm{ad}}
\newcommand{\sgn}{\mathrm{sgn}}
\newcommand{\rmE}{{\mathrm{E}}}
\newcommand{\rmS}{{\mathrm{S}}}
\newcommand{\rmH}{{\mathrm{H}}}
\newcommand{\rmT}{{\mathrm{T}}}
\newcommand{\bR}{{\mathbf{R}}}
\newcommand{\bO}{{\mathbf{O}}}
\newcommand{\bH}{{\mathbf{H}}}
\newcommand{\bJ}{{\mathbf{J}}}
\newcommand{\Gr}{{\mathrm{Gram}}}
\renewcommand{\tilde}{\widetilde}
\newcommand{\oast}{\circledast}
\newtheorem{theorem}{Theorem}
\newtheorem{lemma}{Lemma}
\title{\fontsize{.67cm}{1em}\selectfont Encoding Curved Tetrahedra in Face Holonomies:\\ \vspace{9pt}\fontsize{.5cm}{1em}\selectfont  a Phase Space of Shapes from Group-Valued Moment Maps}
\author[1]{Hal M. Haggard, }
\author[2]{Muxin Han, }
\author[3]{and Aldo Riello }
\affiliation[1]{Physics Program, Bard College, Annandale-on-Hudson, NY 12504, USA}
\affiliation[2]{Institut f\"ur Quantengravitation, Friedrich-Alexander Universit\"at Erlangen-N\"urnberg, Staudtstra{\ss}e 7/B2, 91058\\ Erlangen, Germany}
\affiliation[3]{Perimeter Institute for Theoretical Physics, 31 Caroline Street North, Waterloo, Ontario Canada N2L 2Y5}
\emailAdd{hhaggard(AT)bard(DOT)edu}
\emailAdd{muxin.han(AT)gravity(DOT)fau(DOT)de} %
\emailAdd{ariello(AT)perimeterinstitute(DOT)ca} 
\abstract{ We present a generalization of Minkowski's classic theorem on the reconstruction of tetrahedra from algebraic data to homogeneously curved spaces. Euclidean notions such as the normal vector to a face are replaced by Levi-Civita holonomies around each of the tetrahedron's faces. This allows the reconstruction of both spherical and hyperbolic tetrahedra within a unified framework. A new type of hyperbolic simplex is introduced in order for all the sectors encoded in the algebraic data to be covered. Generalizing the phase space of shapes associated to flat tetrahedra leads to group valued moment maps and quasi-Poisson spaces. These discrete geometries provide a natural arena for considering the quantization of gravity including a cosmological constant. A concrete realization of this is provided by the relation with the spin-network states of loop quantum gravity. This work therefore provides a bottom-up justification for the emergence of deformed gauge symmetries and quantum groups in 3+1 dimensional covariant loop quantum gravity in the presence of a cosmological constant. }
\keywords{Discrete Geometry, Curved Geometry, Polyhedra, Minkowski Theorem, Flat Connections, Chern Simons, Cosmological Constant}
\begin{document}

\maketitle

\section{Introduction}\label{sec_Intro}

In 1897, Hermann Minkowski proved a reconstruction theorem stating that to each non-planar polygon with $L$ edges  $\{\vec a_\ell \in \R^3, \ell\in\{1,\dots,L\} | \sum_\ell \vec a_\ell = \vec 0 \}$, one can associate a unique convex polyhedron in Euclidean three-space $\rmE^3$ with $L$ faces. The area and outward pointing normal of its $\ell$-th  face are $|\vec a_\ell|$ and $\vec a_\ell / |\vec a_\ell|$, respectively \cite{Minkowski1897,Alexandrov2005}. One hundred years later, in 1996, Michael Kapovich and John J. Millson showed how the space of polygons with fixed edge lengths admits a natural phase space structure \cite{Kapovich1996}. The combination of these results is remarkable: it points out that discrete geometries are a natural arena for dynamics. One may then wonder whether this arena is related to the theory of dynamic geometry {\it par excellence}, general relativity. The answer turns out to be positive, though not in a trivial way. In fact, the Kapovich-Millson phase space can be quantized via geometric quantization techniques \cite{Conrady2009} and this quantized space gives a compelling interpretation of the Hilbert space of loop quantum gravity \cite{Rovelli2007,Thiemann2004} (restricted to a single graph) in terms of discrete quantum geometries \cite{Bianchi2011PRD,Freidel2010twist}. The main notions behind this are the following. In loop quantum gravity, the fundamental phase-space variables are $\su$ fluxes (momenta) and $\Su$ holonomies (coordinates) carried by the Faraday-Wilson lines of the gravitational field. These lines cross at nodes, where $\Su$ gauge invariance is imposed as a momentum conservation equation, often referred to as the Gau{\ss} or closure constraint
\be
\sum_{\ell=1}^L (\vec \tau_\ell)^R = \vec 0\,,
\ee
here $\ell$ labels the $L$ Faraday-Wilson lines at a node (all supposed outgoing), and $(\vec \tau_\ell)^R$ is the right invariant vector field on the $\ell$-th copy of $\Su$, i.e. the flux operator along the $\ell$-th Faraday-Wilson line. Since the norm of the flux of the gravitational field carried by one of these lines is associated to the area it carries \cite{Rovelli1995NPB}, it is physically meaningful to reinterpret this equation using Minkowski's theorem. In this way, it can be read as the definition of a quantum convex polyhedron at each intersection of $L$ gravitational Faraday-Wilson lines. How these polyhedra are glued to one another and how they encode the extrinsic geometry of the three-space they span is more complicated and we refer to the cited literature for more details. Nevertheless, the crucial point here is that to each kinematical state of loop quantum gravity one can associate a discrete piecewise-flat quantum geometry thanks to Minkowski's theorem.

In this paper we move toward the generalization of this construction to the case where the model space for the discrete geometry is curved instead of flat. In other words, we generalize Minkowski's theorem to tetrahedra whose faces are flatly embedded in the three-sphere $\rmS^3$ and hyperbolic three-space $\rmH^3$, and conjecture that a similar construction may hold for general curved polyhedra. From a purely mathematical point of view the generalization of Minkowski's theorem is interesting in its own right, and requires new inputs in order to replace ``the notion of face direction by some notion not relying on parallelism in the Euclidean sense'' (\cite{Alexandrov2005}, p. 346), or, in other words, to deal with the parallel transport of the face normals to a single base point. Moreover, the question arises whether the space of curved tetrahedra also admits a natural phase space structure, and eventually how close it is to the Kapovich-Millson one. We will show that a natural phase space structure exists, and it coincides with the one studied by Thomas Treloar in \cite{Treloar2000}.

Surprisingly, this phase space structure is the same in both the spherical and hyperbolic case, which have a unified description in our framework, and it is exactly the generalization of the Kapovich-Millson phase space to geodesic polygons  embedded in $\rmS^3$. (This $\rmS^3$ is \textit{not} the manifold in which the curved tetrahedron is embedded and, again, underlies both the positively- and negatively-curved cases.)  Beside pure mathematics, this generalization is relevant to physics as well. Indeed, this construction is thought to bear strong relations to quantum gravity in the presence of a cosmological constant. On the one hand, this is apparent through the requirement that the simplicial decomposition of the bulk geometry be a solution of Einstein's field equations with the cosmological term within each building block \cite{Bahr2009}. On the other, curved tetrahedra made an appearance already in the semiclassical limit of the Turaev-Viro state sum \cite{Turaev1992,Mizoguchi1992,Taylor2004,Taylor2005}, which in turn, is known to be related to Edward Witten's Chern-Simons quantization of three-dimensional gravity with cosmological constant \cite{Witten1988}. The relations with quantum gravity in (Anti-)de Sitter space constitute our main motivation \cite{HHKR}.

Several works, with close connections to ours, have focused on three spacetime dimensions.  Notably, in  \cite{Dupuis2013,Bonzom2014a,Bonzom2014b,Dupuis2014,Charles2015} and  \cite{Noui2011can,Pranzetti2014tv}, the precise connection was investigated between the Chern-Simons quantization of three dimensional gravity and the spinfoam or loop-theoretic polymer quantizations, respectively. With this in mind, we should emphasize that the present work studies three dimensional discrete geometries {\it as boundaries of four dimensional spacetimes}; this is in contrast to the research cited above, which focused on the description of geometries in two-plus-one dimensions. This difference in dimensionality implies a mismatch in the geometrical quantities encoded in the Faraday-Wilson lines: in four and three spacetime dimensions these carry units of area and length respectively. Unsurprisingly, the geometrical reconstruction theorems are completely different in the two cases.

 A second interesting divergence of the two approaches is the fact that the sign of the geometric curvature must be decided {\it a priori} in the two-plus-one case (in particular, Girelli {\it el al.} restrict their analysis to the hyperbolic case), while it is determined at the level of each solution in our case. Again, this is because our formalism automatically allows for---in fact, requires---both positively and negatively curved geometries. It is intriguing to attribute this difference to the lack of a local curvature degree of freedom in three-dimensional gravity (since these are purely kinematical constructions, one should take this statement {\it cum granu salis}). In spite of these differences, there is an important feature the two constructions share: in both cases one is naturally lead to consider  phase-space structures and symmetries that are deformed with respect to the standard ones of quantum gravity. In particular the momentum space of the geometry is curved and the symmetries  are distorted to (quasi-)Poisson Lie symmetries, which are the classical analogues of quantum-group symmetries. We leave the discussion of the quantization of our phase space and its symmetries for a future publication.

Another piece of recent work in the loop gravity literature that interestingly shares some features with our construction is by Bianca Dittrich and Marc Geiller \cite{Dittrich2014a,Dittrich2014b,Dittrich2015}. While constructing a new representation (a new ``vacuum'') for loop quantum gravity, adapted to describe states of constant curvatures (and no metric), they are lead to deal with exponentiated fluxes as the meaningful operators. As a consequence, areas in their formulation are also associated to $\Su$---instead of $\su$---elements. Nonetheless, the parallel seems limited, since it appears that they are not forced to deform their phase space and symmetry structures.

Finally, Kapovich and Millson, and later Treloar, recognized that the phase space structure of polygons corresponds  to William Goldman's symplectic structure on the moduli space of flat connections on  an $L$ times punctured two-sphere. Our result provides this correspondence with a more direct and physical interpretation. In fact, the punctures on the two-sphere can be understood as arising from the gravitational Faraday-Wilson lines piercing an ideal two-sphere surrounding one of their intersection points. Exactly as in the flat case, these lines characterize the face areas and define a polyhedron. This picture can be used to extend the covariant loop quantum gravity framework in four spacetime dimensions, the spinfoam formalism, to the case with a non-vanishing cosmological constant. This is the goal of our companion paper \cite{HHKR}, which proposes a generalization of the spinfoam models constructed by John Barrett and Louis Crane \cite{Barrett1998,Baez1999}, and which eventually  developed into the Engle-Pereira-Rovelli-Livine/Freidel-Krasnov (EPRL/FK) models \cite{Engle2007,Engle2008,Freidel2008}.

In the companion paper \cite{HHKR}, we analyze $\mathrm{SL}(2,\mathbb{C})$ Chern-Simons theory with a specific Wilson graph insertion. The model can be viewed as a deformation of the EPRL/FK model aimed at introducing the cosmological constant in the covariant loop quantum gravity framework. The semiclassical analysis of the model's quantum amplitude is given by the four-dimensional Einstein-Regge gravity, augmented by a cosmological term, and discretized on homogeneously curved four simplices. Interestingly, the key equation studied in this paper, i.e. the generalization of the Gau{\ss} (or closure) constraint to curved geometry, arises in that context simply as one of the equations of motion.

The present paper is divided in two parts. In the first we introduce the generalization of Minkowski's theorem to curved tetrahedra. First we discuss the strategy underlying the theorem in the spherical case (\autoref{sec_strategy}). We then gradually extend our analysis to more general settings eventually including both signs of the curvature (\autoref{sec_hypgeom} to \ref{sec_twosheeted}). The first part concludes with the statement and proof of the theorem in its general form (\autoref{sec_theorem}). The second part is dedicated to the description of the phase space of shapes of discrete tetrahedra. We first introduce the subject and its relations with curved polygons and moduli spaces of flat connections (\autoref{sec_relations}). Then, we review the quasi-Poisson structure one can endow $\Su$ with (\autoref{sec_qpoisson}), which serves as a preliminary step to the actual description of the phase space of shapes (\autoref{sec_tetrahedron_phsp}). We conclude this part with a brief account of the equivalent quasi-Hamiltonian approach (\autoref{sec_qhamiltonian}). The paper closes with some physical considerations and an outlook towards future developments of the work (\autoref{sec_summary} and \ref{sec_outlook}).

\newpage
\part{Minkowski's theorem for curved thetrahedra}
\section{General strategy and the spherical case}\label{sec_strategy}

In the flat case, Minkowski's theorem associates to any solution of the so-called ``closure equation''
\be
\sum_{\ell=1}^4 \vec a_\ell \equiv \vec a_1 + \vec a_2  + \vec a_3 + \vec a_ 4 = \vec 0\,,\quad\text{with } \vec a_\ell \in \R^3\,,
\label{eq_flatclosure}
\ee
a tetrahedron in $\rmE^3$, whose faces have area $a_\ell := |\vec a_\ell|$ and outward normals $\hat n_\ell := \vec a_\ell/a_\ell$. We call the vectors $\vec a_\ell \equiv a_\ell \hat n_\ell$  area vectors. 

Our generalization begins with the most symmetrical curved spaces, the three-sphere $\rmS^3$ and hyperbolic three-space $\rmH^3$, which have positive and negative curvature, respectively. 
In these ambient spaces, define a curved polyhedron as the convex region enclosed by a set of $L$ flatly embedded surfaces (the faces) intersecting only at their boundaries. A flatly embedded surface is a surface with vanishing extrinsic curvature and the intersection of two such surfaces is necessarily a geodesic arc of the ambient space. In the spherical case, the flatly embedded surfaces and the geodesic arcs are hence portions of great two-spheres and of great circles, respectively. Note that in the hyperbolic case this definition includes curved polyhedra extending to infinity. This and other properties specific to the hyperbolic case are discussed beginning in the next section.



Our main result is that Minkowski's theorem and the closure equation, Eq. \eqref{eq_flatclosure}, admit a natural generalization to curved tetrahedra in $\rmS^3$ and $\rmH^3$. The curved closure equation is 
\be
O_4 O_3 O_2 O_1 = \E \,, \quad\text{with } O_\ell \in \SO,
\label{eq_closure}
\ee
where $\E$ denotes the identity in $\SO$. In the remainder of this section, we explain how this equation encodes the geometry of curved tetrahedra. Before going into this, we want to stress the essential non-commutativity of this equation, which mirrors the fact that the model spaces are curved, and therefore is the crucial feature of our approach. Indeed, non-commutativity has far-reaching consequences that are particularly apparent in the last section of the paper where the curved closure equation is used as a moment map. This non-commutativity will also be the source of an ambiguity in the reconstruction that is unique to the curved case.

As in the flat case, the variables appearing in the closure equation are associated to the faces of the tetrahedron. Indeed, the $\{O_\ell\}$ shall be interpreted as the holonomies of the Levi-Civita connection around each of the four faces of the tetrahedron. Since the faces of the tetrahedron are by definition flatly embedded surfaces in $\rmS^3$, any path contained within them parallel transports the local normal to the face at its starting point into the local normal to the face at its endpoint. Therefore, choosing at every point of the face a frame in which the local normal is parallel to $\hat z$, one can reduce via a pullback the $\so(3)$ connection to an $\mathfrak{so}(2)$ one  without losing any information. ({Note that there always exists a unique chart covering an open neighborhood of the whole face.}) In this two-dimensional setting, it is a standard result that a vector parallel transported around a closed (non self-intersecting) loop within the unit sphere gets rotated by an angle equal to the area enclosed by the loop. Therefore, the holonomy $O_\ell$ around the $\ell$-th face of the spherical tetrahedron, calculated at the base point $P$ contained in the face itself, is given by
\be
O_\ell(P) = \exp \left\{a_\ell \hat n_\ell(P) \cdot \vec J \;\right\}\,,
\label{eq_O}
\ee
where $\{\vec J\}$ are the three generators of $\so(3)$, $a_\ell$ is the area of the face, and $\hat n(P)\in \rmT_P\rmS^3$ is the direction normal to the face in the local frame at which the holonomy is calculated. Let us for a second ignore the issues related to curvature and non-commutativity, and discuss what happens in the flat Abelian limit where the radius of curvature of the three-sphere goes to infinity and the ambient space becomes nearly flat. To make this explicit, introduce the sphere radius $r$ into the previous expression: 
\be
O_\ell = \exp \left\{ \frac{a_\ell}{r^2} \hat n_\ell \cdot \vec J \;\right\}\,.
\ee 
In the limit $r\rightarrow\infty$, the curved closure equations reduces, at the leading order, to the flat one:
\be
O_\ell \stackrel{r\rightarrow\infty}{\approx} \E + \frac{a_\ell}{r^2}\hat n_\ell \cdot \vec J + \dots
\quad\Rightarrow\quad
O_4 O_3 O_2 O_1 \stackrel{r\rightarrow\infty}{\approx} \E + r^{-2} \Big(a_4\hat n_4 + a_3 \hat n_3 + a_2\hat n_2 + a_1 \hat n_1\Big) \cdot \vec J + \dots .
\ee
Importantly, the geometrical meaning of the variables is exactly the same
as in Eq. \eqref{eq_flatclosure}. Thus, our formulation subsumes the flat one as a limiting case.


In the curved setting, it is crucial to keep track of the holonomy base point; for within a curved geometry, only quantities defined at, or parallel transported to, a single point can be compared and composed with one another. Therefore, all four of the holonomies appearing in the curved closure equation must have the same base point. In spite of this, there is no point shared by all four faces of the tetrahedron at which one can naturally base the holonomies, and therefore at least one of them must be parallel transported away from its own face before being multiplied with the other three. Actually, the curved closure equation itself has no information about the base points of the holonomies or about which paths they have been parallel transported along to arrive at a common frame. This must be an extra piece of information that needs to be fed into the reconstruction algorithm. 
Analogous interpretational choices---though for clear reasons less numerous---have to be made in the flat case. Here, we provide a standard set of paths on an abstract tetrahedron embedded in $\rmS^3$ along which the $\{O_\ell\}$ are assumed to be calculated. Such a choice of standard paths must also account for the presence of the identity element on the right hand side of the curved closure equation. This comes from the fact that the chosen standard paths compose to form a homotopically trivial loop. Interestingly, the curved closure equation can also be related to an integrated version of the Bianchi identities for the three dimensional Riemann tensor (see e.g. \cite{Freidel2003}).

Label the vertices of the geometrical tetrahedron as in \autoref{fig_topordering}. This numbering induces a topological orientation on the tetrahedron, which must be consistent with the geometrical orientation of the paths around the faces. Faces are labeled via their opposite vertex (e.g., face 4 is the one at the bottom of \autoref{fig_topordering}), while edges are labeled by the two vertices they connect. %
\begin{figure}%
\begin{center}%
\includegraphics[height=3cm]{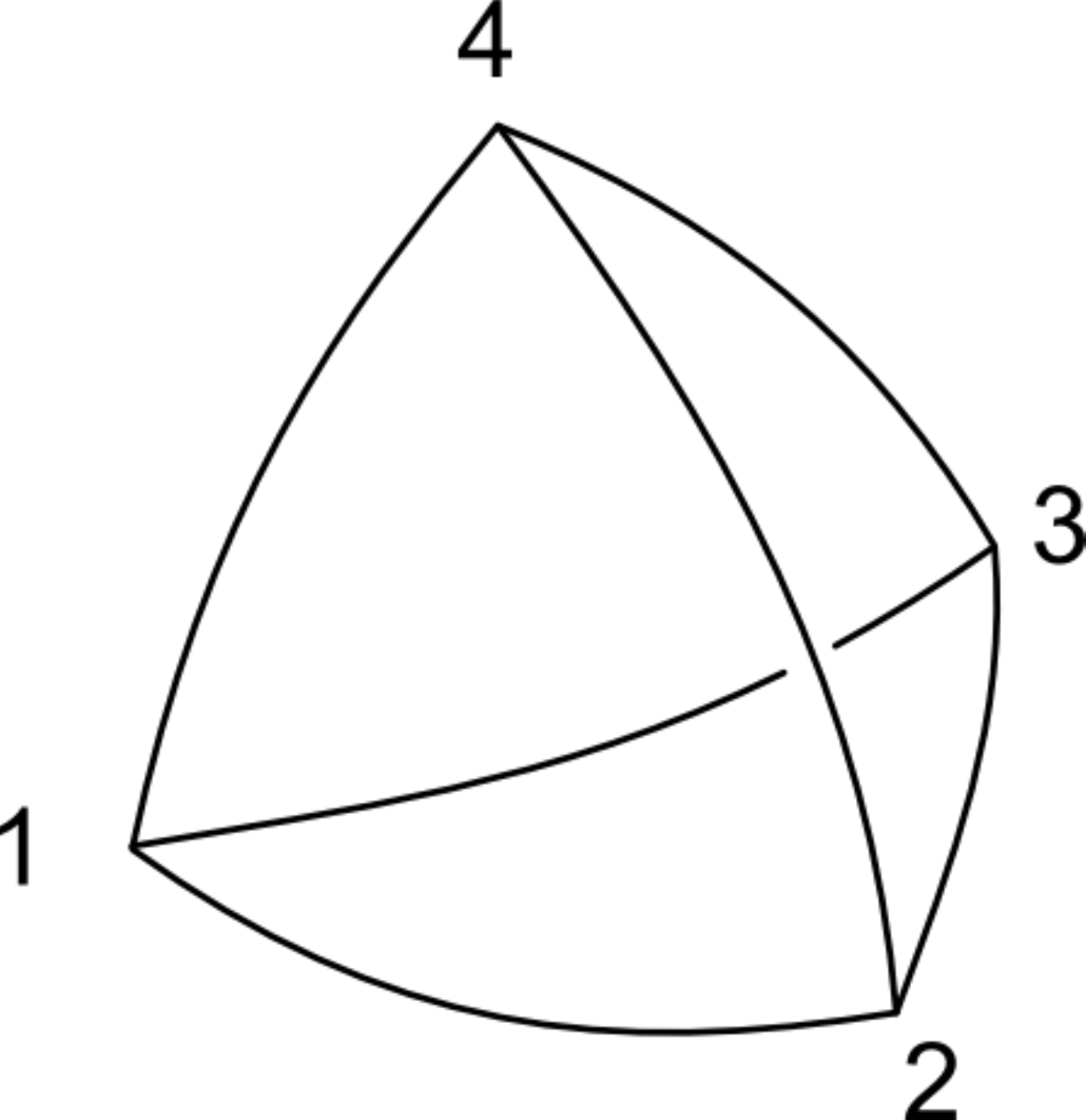}
\caption{A standard numbering of the vertices of the tetrahedron, which also induces a particular topological orientation. The BurntOrangep character of this tetrahedron is simply a device for underlining its spherical nature. However, similar pictures result from stereographic projection of a spherical tetrahedron in $\rmS^3$ onto $\R^3$. This projection sends great two-spheres of $\rmS^3$ into spheres of $\R^3$, though possibly with different radii. Note that the convex or concave aspect of the stereographically projected spherical tetrahedron has no intrinsic meaning.}%
\label{fig_topordering}
\end{center}%
\end{figure}%
Each face is traversed in a  counterclockwise sense when seen from the outside of the tetrahedron.
This is consistent with the tetrahedron's topological orientation. The normals appearing in the holonomies, Eq. \eqref{eq_O}, are hence the outward pointing normals to the face whenever the base point $P$ of the holonomy $O(P)$ lies on that face (right-handed convention). There is no natural common base point for all four faces. However, any three faces do share a point. Pick faces $\ell=1,2,3$, which share vertex 4, and base the holonomies at this vertex: 
\be
O_\ell := O_\ell (4).
\ee
Then, in the case of holonomies $O_{1,2,3}(4)$, the vectors $\{\hat n_1(4),\hat n_2(4),\hat n_3(4)\}$ are outward normals to their respective faces in the frame of vertex 4. Clearly, this is not the case for the normal $\hat n_4(4)$. Thus, we must specify the path used to define the holonomy around face 4 and its transport to vertex 4. By now, this path is completely fixed by the curved closure equation. It consists of defining $O_4(2)$ in an analogous way to the $O_{1,2,3}(4)$ and then parallel transporting it to vertex 4 through the edge $(42)$. The set of relevant paths is shown in \autoref{fig_paths}. Up to the choice of the base point, this is manifestly the simplest (and shortest) set of paths going around each face in the order required by the closure equation and composing to the trivial loop. For this reason, we will call these simple paths. 
\begin{figure}%
\begin{center}%
\includegraphics[height=3cm]{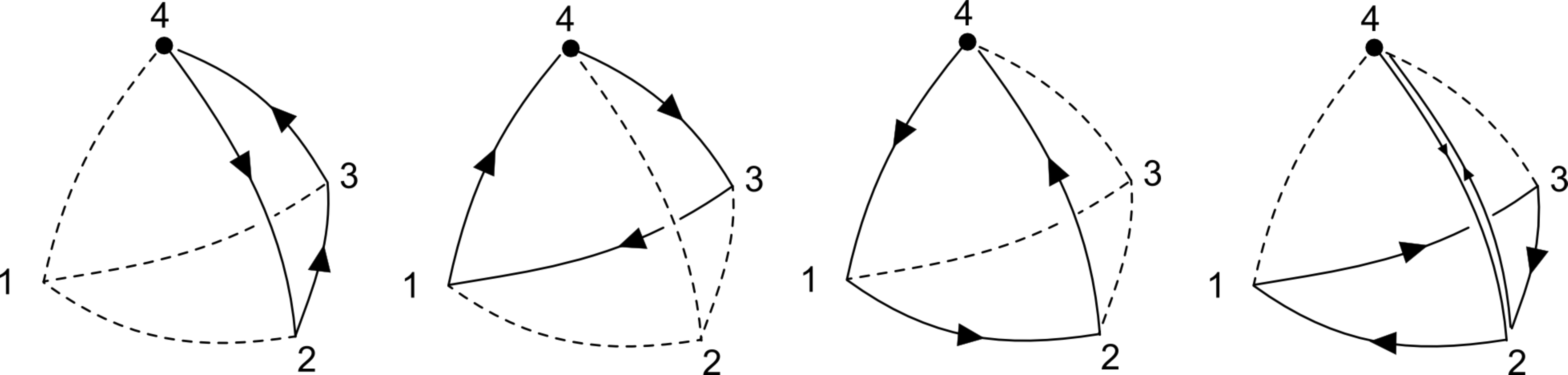}
\caption{The set of simple paths used to interpret the holonomies $\{O_\ell\}$.}%
\label{fig_paths}
\end{center}%
\end{figure}%

The holonomies along the simple paths, $\{O_\ell\}$, can be expressed more explicitly by introducing the edge holonomies $\{o_{m\ell}\}$, encoding the parallel transport from vertex $\ell$ to vertex $m$ along the edge connecting them (we use leftward composition of holonomies). Thus, $o_{\ell m} \equiv o_{m\ell}^{-1}$ , and
\be
\left\{
\begin{array}{l}
O_1 = o_{43}o_{32}o_{24}\\
O_2 = o_{41}o_{13}o_{34}\\
O_3 = o_{42}o_{21}o_{14}\\
O_4 = o_{42} O_4(2) o_{24} = o_{42}\left[ o_{23}o_{31}o_{12}\right] o_{24}
\end{array}
\right.\,.
\ee
Let us stress once more that, since the closure equation is preserved by a cyclic permutation of the holonomies, the assignment to a specific holonomy of the label ``4'' is indeed an extra input needed by the reconstruction. We call this vertex the special vertex.

Another important symmetry of the closure equation is its invariance under conjugation of the four holonomies by a common element of $\SO$:
\be
O_\ell \mapsto R O_\ell R^{-1}\,,\quad\text{with } R\in\SO\,. 
\ee
This maps the areas $a_\ell$ into themselves, and the normals $\hat n_\ell$ into $ \bR \hat n_\ell$.  %
(Here and in the rest of the paper, bold-face symbols stand for matrices in the fundamental representation; e.g. in the previous equation, $\bR$ is the $3\times3$ matrix corresponding to $R\in \mathrm{SO}(3)$.) This symmetry can be interpreted either as a change of reference frame at the base point 4, or as the effect of a further parallel transportation of the $\{O_\ell\}$ along another piece of path from vertex 4 to some other base point. The latter interpretation is particularly compelling when $R=o_{24}$ : the result of this transformation is an exchange of the r{\^o}le of vertices (and therefore faces) 4 and 2. We conclude that picking vertex 4 or 2 as special, are gauge equivalent choices. So, it is more appropriate to refer to the edge $(24)$ as the special edge rather than referring to 2 or 4 as special vertices. 


A set of holonomies  that close $\{O_\ell \; | \; \prod_\ell O_\ell =\E \}$ modulo simultaneous conjugation, is naturally interpreted as the moduli space of $\mathrm{SO}(3)$ flat connections on a sphere with four punctures. Indeed, since the holonomies of a flat connection can only depend on the homotopy class of the (closed) path along which they are calculated, these connections are maps from the fundamental group of the punctured sphere to $\mathrm{SO}(3)$. Therefore, the moduli space of flat connections is this space of maps modulo conjugation: 
\begin{align}
\mathcal{M}_\text{flat} \big[L\text{-punctured } \rmS^2  , \mathrm{SO}(3)\big] & \cong  \mathrm{Hom}\big[\pi_1( L\text{-punctured } \rmS^2  ), \mathrm{SO}(3)\big]\big/\mathrm{SO}(3) \label{eq_isomorph} \notag\\
& \cong \big\{ O_1, \dots , O_L \in \mathrm{SO}(3) \; | \; O_L\cdots O_1 = \mathrm e \big \} \big/ \text{conjugation} 
\end{align}
Conjugation is the residual gauge freedom left at the arbitrarily chosen base point of the holonomies. The connection with the tetrahedron's geometry arises from the observation that the fundamental group of the 4-punctured sphere is isomorphic to that of the tetrahedron's one-skeleton. 
However, this isomorphism is not canonical, and constitutes the extra piece of information that is needed to run the reconstruction, i.e. the knowledge of the precise paths associated to the $\{O_\ell\}$.

The choice of a special edge, e.g. (24), breaks permutation symmetry. However, conjugation is a true symmetry of the problem, and is the analogue of rotational invariance for the flat case. Therefore, any quantity with an intrinsic geometrical meaning must be obtained through conjugation invariant combinations of the $\{O_\ell\}$. The normals $\{\hat n_\ell\}$ are not gauge invariant observables, but their scalar and triple products are.

Scalar products between the normals have a clear meaning: they encode the dihedral angles between the faces of the tetrahedron. Because the faces of the tetrahedron are flatly embedded, these dot products are invariant along the edge shared by two faces and hence the dihedral angles are well defined.  For faces 1, 2, and 3, the situation is simple. The holonomies $\{O_\ell\}$ and the normals appearing in their exponents are defined at vertex 4, which is shared by all three faces. Therefore, indicating with $\theta_{\ell m}$ the (external) dihedral angle between faces $\ell$ and $m$, see \autoref{fig_dihedral},%
\begin{figure}%
\begin{center}%
\includegraphics[height=3.1cm]{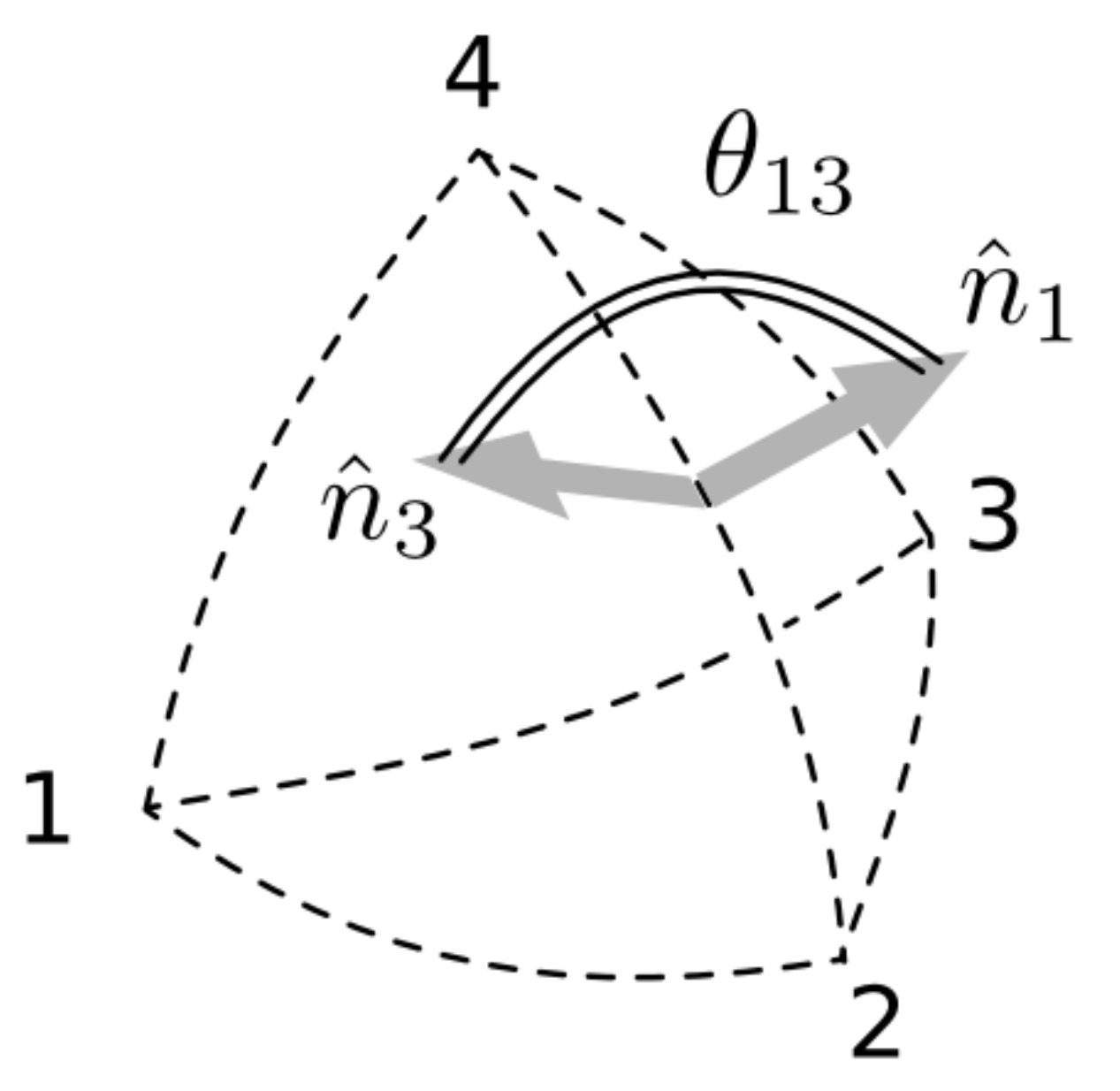}
\caption{The dihedral angle $\theta_{\ell m}$ spans the arc from outward normal $\ell$ to $m$. Here we illustrate the case $\{\ell,m\}=\{1,3\}$.}%
\label{fig_dihedral}
\end{center}%
\end{figure}%
\be
\cos \theta_{\ell m} = \hat n_\ell \cdot \hat n_m\,,\quad\text{for }\ell,m\in\{1,2,3\}\,.
\ee
Recall that $O_4$ is first defined at vertex 2 and then parallel transported to vertex 4 along the edge $(24)$. Because of the gauge equivalence of 2 and 4 as special vertices, and because vertex 2 is shared by faces 1, 3, and 4, calculating the dihedral angles between these face is as simple as before:
\be
\cos \theta_{\ell m} = \hat n_\ell \cdot \hat n_m\,,\quad\text{for }\ell,m\in\{1,3,4\}\,.
\ee
To see this in a more direct way, note that, for example, $\cos \theta_{14} = \hat n_1(2) \cdot \hat n_4(2) = \left[\textbf{o}_{24} \hat n_1(4)\right]\cdot \left[\textbf{o}_{24} \hat n_4(4)\right] =  \hat n_1(4) \cdot \hat n_4(4)$, which is exactly the result of the previous equation. 

The remaining dihedral angle, between the opposite, special faces 2 and 4, is more delicate. This is because neither of the vertices 2 or 4 is shared by the faces 2 or 4. To calculate $\cos\theta_{24}$, we use the normals at vertex 3:
\begin{subequations}
\begin{align}
\cos \theta_{24} &= \hat n_2(3) \cdot \hat n_4(3) \notag\\
&= \left[ \textbf{o}_{34}\hat n_2(4) \right] \cdot \left[ \textbf{o}_{32}\textbf{o}_{24}\hat n_4(4) \right] \notag\\
& = \hat n_2(4) \cdot \bO_1 \hat n_4(4)\,.
\end{align}
The paths used for transporting the normals from vertex 3 to vertex 4 are not accidental; they lie within their own face up to the point where the face holonomy is based, and then move on, when necessary, to vertex 4 through the special edge (24). All paths lying within a single face are equivalent because of the flat embedding and so we use the most convenient choice. 

Had we chosen to define $\theta_{24}$ at vertex 1 instead of 3, the result would have been
\be
\cos \theta_{24} = \hat n_2(4) \cdot \bO_3^{-1} \hat n_4(4)\,.
\ee
\end{subequations}
A quick check shows that these two results are equivalent, thanks to the closure equation and the relation $\bO_\ell \hat n_\ell = \hat n_\ell$. Summarizing,
\be
\left\{
\begin{array}{ll}
\cos \theta_{24} = \hat n_2 \cdot \bO_1 \hat n_4 = \hat n_2 \cdot \bO_3^{-1} \hat n_4 \\
\cos \theta_{\ell m} = \hat n_\ell \cdot \hat n_m & \text{for } \{\ell,m\} \neq \{2,4\}
\end{array}
\right.
\label{eq_dihedral}
\ee
Notice that $\theta_{\ell m}\in(0,\pi)$ in order to have a convex tetrahedron, and although this condition would be redundant for a tetrahedron in flat space,
one could use the sphere's non-trivial topology to build non-convex spherical tetrahedra.\footnote{To construct an example, one can replace one of the edges of a standard convex spherical tetrahedron with its complement with respect to the great circle it lies on. Another example can be constructed by replacing a whole face with its spherical complement. \label{fn1}} We are not interested in reconstructing such objects. Moreover, the previous condition implies that we can invert Eq. \eqref{eq_dihedral} to obtain the values of the $\{\theta_{\ell m}\}$ themselves. These formulas require only data entering the curved closure equation, and not the edge holonomies $\{o_{\ell m}\}$, as expected from considerations of gauge invariance.

There is still a subtle point to clarify. How can the directions of the outward normals $\{\hat n_\ell\}$ be extracted from the $\{O_\ell\}$? The face areas of the tetrahedron are positive real numbers $a_\ell$ lying in the interval $(0,2\pi)$ due to the tetrahedron's convexity (see footnote \ref{fn1}). However, the holonomy $O_\ell$ cannot distinguish between two triangles lying on the same great two-sphere in $\rmS^3$ that have areas $a$ and $(2\pi-a)$, respectively, and corresponding normals $\hat n$ and $-\hat n$. In formulas:
\be
\exp \left\{a \ \hat n \cdot \vec J \;\right\}= \exp \left\{ (2\pi-a) (-\hat n)\cdot \vec J \;\right\}\,.
\label{eq_ambiguity}
\ee
This is a consequence of the fact that both the trivial loop and a great circle have trivial $\SO$-holonomy. To resolve this ambiguity, it is enough to appeal to convexity by checking the signs of the triple products among the normals. Indeed, the triple products are naturally associated to the vertices of the tetrahedron, and their signs relate to its convexity (as well as to our choice of its topological ordering, and to the outward pointing property of the normals $\{\vec n_\ell\}$), see \autoref{fig_triple}. %
\begin{figure}%
\begin{center}%
\includegraphics[height=3.3cm]{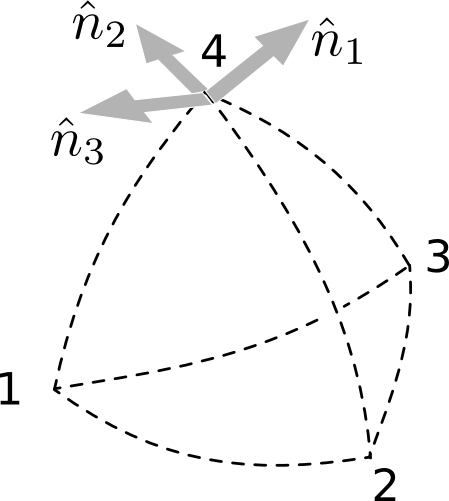}%
\caption{The three vectors involved in the triple product at vertex 4. Given the topological orientation of the tetrahedron, its convexity, and supposing all normals are outward pointing, one finds that $\sgn\left[\left( \hat n_1 \times \hat n_2\right) . \hat n_3\right]>0$.}%
\label{fig_triple}%
\end{center}%
\end{figure}%

Concretely, this translates into the following requirements for the normals:
\be
\left\{
\begin{array}{ll}
\text{at vertex 4: } & \left[ \hat n_1(4) \times \hat n_2(4) \right]\cdot \hat n_3(4) >0\\
\text{at vertex 2: } & \left[ \hat n_1(2) \times \hat n_3(2) \right]\cdot \hat n_4(2) >0\\
\text{at vertex 1: } & \left[ \hat n_2(3) \times \hat n_1(3) \right]\cdot \hat n_4(3) >0\\
\text{at vertex 3: } & \left[ \hat n_3(1) \times \hat n_2(1) \right]\cdot \hat n_4(1) >0
\end{array}
\right.\,.
\label{eq_tripleproducts}
\ee
After parallel transporting to the common base point, vertex 4, these conditions read
\be
\left\{
\begin{array}{ll}
\left( \hat n_1 \times \hat n_2 \right)\cdot \hat n_3 &>0\\
\left( \hat n_1 \times \hat n_3 \right)\cdot \hat n_4 &>0\\
\left( \hat n_2 \times \hat n_1 \right)\cdot \bO_1 \hat n_4& >0\\
\left( \hat n_3 \times \hat n_2 \right)\cdot \bO_3^{-1} \hat n_4& >0
\end{array}
\right.\,.
\label{eq_triple}
\ee
A moment of reflection shows that these conditions are exactly what is needed to solve the ambiguity expressed in equation \eqref{eq_ambiguity}. In fact, among the $2^4$ possible redefinitions of the normals by change of signs $\{\hat n_\ell \}\mapsto \{\pm_\ell \hat n_\ell\}$, one and only one of them satisfies Eq. \eqref{eq_triple}.

It is interesting to express the intrinsic geometrical quantities of the tetrahedron, such as areas, dihedral angles, and triple products, directly in terms of the holonomies $\{O_\ell\}$. The simplest conjugation invariant set of observables are traces of products of the $\{\bO_\ell\}$. These turn out to be quite involved. A convenient alternative is given by the same invariants for the \textit{lifts} of the $\{O_\ell\}$ to $\SU$. Call these lifts $\{H_\ell\}$, and their matrices in the fundamental representation $\{\bH_\ell\}$. The twofold ambiguity associated with the lift reflects the geometric ambiguity of Eq. \eqref{eq_ambiguity}, which is already present at the level of $\SO$. It is tempting to conjecture that considering $\SU$ closures solves this ambiguity, and that the $\SU$ holonomies can be automatically associated to the spin connection of the homogeneously curved space. Unfortunately, this is not the case, since by multiplying the geometrical values of any two (or four) of the $\SU$ holonomies by $-\mathbf{1}$, one obtains another sensible closure equation that looses its geometrical interpretation.\footnote{A more sophisticated attempt to make this work would consist in allowing non-convex tetrahedra. Indeed, taking the equatorial complement of one side of  a standard tetrahedron would modify the area of the two adjacent faces from $a_\ell$ to $2\pi-a_\ell$ at the price of obtaining a non-convex tetrahedron. The problem with this extension is that there is no unique choice of sides to complement. Hence the uniqueness of the reconstructed geometry would be lost.} Hence, we are lead to allow any consistent lift with the $\SU$ closure 
\be
H_4H_3H_2H_1=\E,
\ee
and eventually correct for the sign of (an even number of) the holonomies in such a way that all the inequalities of Eq. \eqref{eq_triple} are satisfied. A slightly different way of stating this, with closure only holding up to a sign, is that what we are really considering are $\mathrm{PSU}(2)$ closures, and only these are in one to one correspondence with curved tetrahedra. In the following we will mostly deal with $\SU$ holonomies, to which we associate geometries in an almost one-to-one way.

The convenience of using the $\{\bH_\ell\}$ comes from the simple identity: 
\be
\bH = \exp\left\{ a \hat n \cdot \vec \btau \,\right\}= \cos{\frac{a}{2}}\;\mathbf{1} - \I \sin \frac{a}{2} \;\hat n \cdot \vec{\bsigma},
\ee
where $\vec\sigma$ are the Pauli matrices, and $\vec\tau:=-\frac{\I}{2}\vec\sigma$. %
%
Define the connected part of the half-trace of the product of $p$ holonomies, $\langle\; \stackrel{p}{\overbrace{H_\ell \cdots H_m }} \; \rangle_C \;$:
\begin{subequations}
\begin{align}
\langle H \rangle_C & := \frac{1}{2}\Tr(\bH),\\
\langle H_\ell H_m \rangle_C  & := \frac{1}{2}\Tr(\bH_\ell \bH_m) - \frac{1}{4}\Tr(\bH_\ell)\Tr(\bH_m),\\
\langle H_\ell H_m H_q \rangle_C  & := \frac{1}{2}\Tr(\bH_\ell \bH_m \bH_q) - \left[ \frac{1}{4}\Tr(\bH_\ell)\Tr(\bH_m\bH_q) + \text{cyclic} \right] + \frac{1}{4}\Tr(\bH_\ell)\Tr(\bH_m)\Tr(\bH_q),\\
\text{etc.}\notag
\end{align}
\end{subequations}
It is then straightforward to check that the geometrical quantities of interest are normalized versions of these quantities:
\begin{subequations}
\begin{align}
\cos \frac{a_\ell}{2} & = \pm_\ell \langle \bH_\ell \rangle_C,\\ 
\cos \theta_{\ell m} & = \hat n_\ell . \hat n_m = - \frac{\pm_\ell \pm_m\langle \bH_\ell \bH_m \rangle_C}{\sqrt{1-\langle \bH_\ell \rangle_C^2}\sqrt{1-\langle \bH_m \rangle_C^2}} \quad\text{for } \{\ell,m\}\neq\{2,4\}  \label{eq_scalarOO},\\
\left( \hat n_\ell \times \hat n_m \right).\hat n_q & = - \frac{\pm_\ell \pm_m \pm_q \langle \bH_\ell \bH_m \bH_q \rangle_C}{\sqrt{1-\langle \bH_\ell \rangle_C^2}\sqrt{1-\langle \bH_m \rangle_C^2}\sqrt{1-\langle \bH_q \rangle_C^2}} \quad\text{for } \{\ell,m,q\}=\{1,2,3\} \text{ or } \{1,3,4\} ,
\label{eq_tripleOOO}
\end{align}
\end{subequations}
with the appropriate generalization for $\cos\theta_{24}$ and the missing triple products. The $\{\pm_\ell\}$ signs can be thought of as representing the branches of the respective square roots, and are uniquely fixed by imposing the positivity of the triple products, i.e. the convexity of the tetrahedron. They are eventually related to the areas via $\pm_\ell = \sgn\sin a_\ell$, which is another way of stating the ambiguity $\{ a_\ell , \hat n_\ell\} \mapsto \{2\pi - a_\ell,-\hat n_\ell\}$.

At this point we are left with the simple exercise of reconstructing a spherical tetrahedron from its known dihedral angles $\{\cos\theta_{\ell m}\}$. Notice that the areas $\{a_\ell\}$ are not needed. Their consistency with respect to the reconstructed geometry will be proved in complete generality in \autoref{sec_theorem}. The key equation in the reconstruction is the spherical law of cosines, relating the edge lengths of a spherical triangle to its face angles. With the notation of \autoref{fig_sphcosinelaw}, this law and its inverse read%
\begin{figure}%
\begin{center}%
\includegraphics[height=3cm]{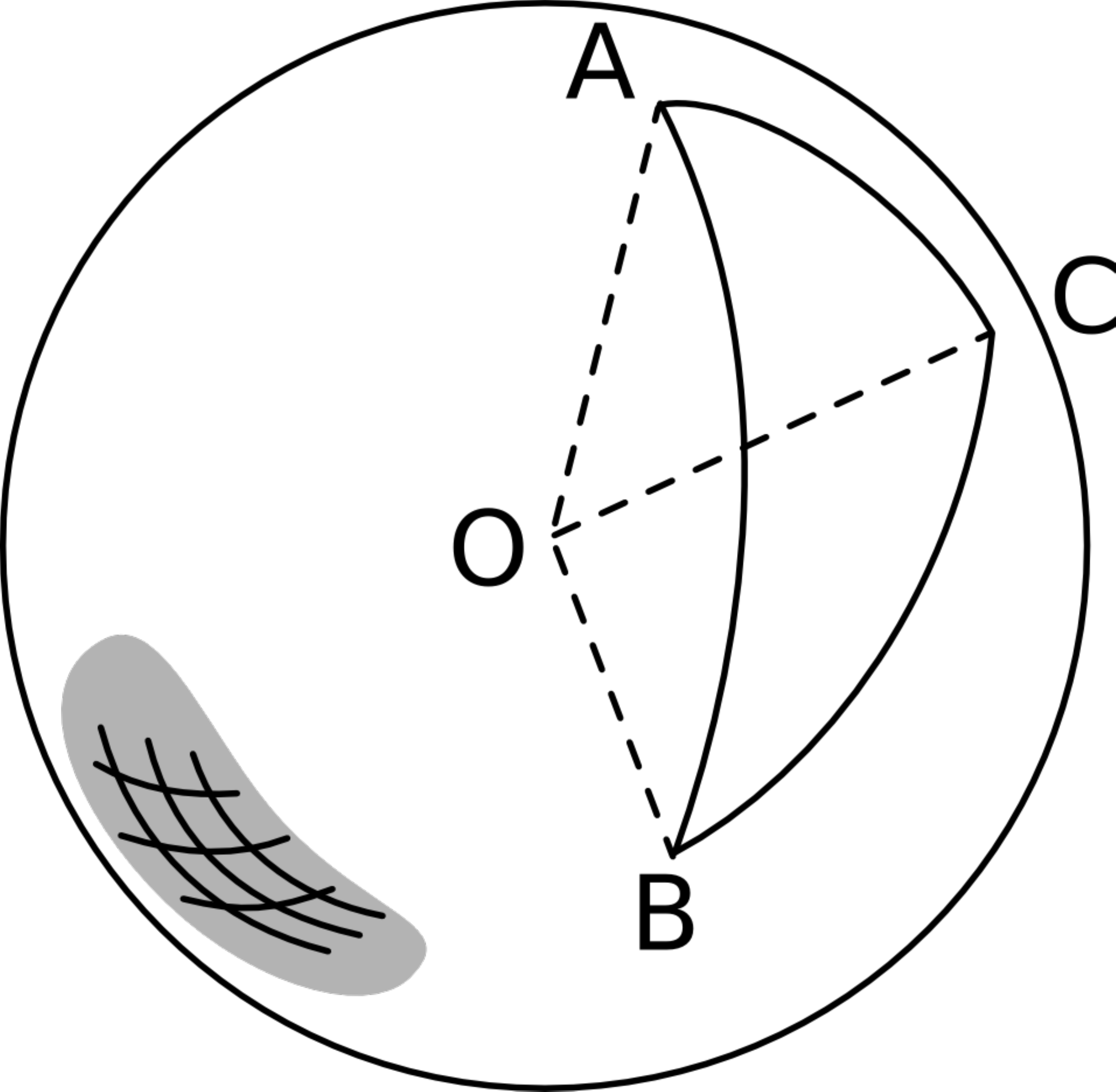}
\caption{A spherical triangle illustrating the notation for Eqs. \eqref{eq_sphcos} and \eqref{eq_sphcosinv}, the spherical cosines laws.}%
\label{fig_sphcosinelaw}
\end{center}%
\end{figure}%
\begin{subequations}
\begin{align}
\cos \widehat{C} & = \frac{\cos \wideparen{AB} - \cos\wideparen{AC} \cos\wideparen{BC}}{\sin\wideparen{AC} \sin\wideparen{BC}}\,,%
\label{eq_sphcos}\\
{}\notag\\
\cos \wideparen{AB} & = \frac{\cos \widehat C+ \cos \widehat A \cos \widehat B}{\sin \widehat{A} \sin\widehat{B}}\,,%
\label{eq_sphcosinv}
\end{align}
\end{subequations}
where $\wideparen{AB}$ is the arclength (on the unit sphere) between the vertices $A$ and $B$, and $\widehat{C}$ is the angle between the arcs $AC$ and $BC$ at point $C$. By putting an infinitesimal sphere around the vertex $\ell$ of the spherical tetrahedron, and looking at the spherical triangle defined by the intersections of this sphere with the edges stemming from vertex $\ell$, one can use Eq. \eqref{eq_sphcos} to deduce the three face angles at the vertex $\ell$ from the tetrahedron's dihedral angles. Once all the face angles are known, Eq.  \eqref{eq_sphcosinv} yields the edge lengths for each face of the tetrahedron. Therefore, by using just one formula and its inverse, it is possible to deduce the full geometry of the spherical tetrahedron from its dihedral angles. This is possible, in the curved case, because the radius of curvature provides a natural scale to translate angles into arclengths. In this respect, the flat case is a degenerate limit in which scale invariance appears. In the flat closure,  Eq. \eqref{eq_flatclosure}, the areas can all be rescaled by a common factor without altering the normals. No analogously simple symmetry is present in the curved case.


\section{A first look at the hyperbolic case}\label{sec_hypgeom}

In the case of hyperbolic tetrahedra, the reconstruction theorem proceeds in essentially the same way as above. Once again, the faces of the curved tetrahedron are required to be flatly embedded, which implies the holonomies around them have a form completely analogous to those in the spherical case (Eq. \eqref{eq_O}):
\begin{align}
O_\ell (P) = \exp \left\{-a_\ell \hat n_\ell(P).\vec J\;\right\}\,,
\label{eq_Ohyp}
\end{align}
where all the symbols are interpreted in the same manner, and the minus sign is due to the negative sign of the curvature. A crucial fact about this formula is that the holonomies are again in $\SO$, and not in some other group with different signature. The simple reason for this is that $\SO$ is the group of symmetries of the tangent space (at a point) of both $\rmS^3$ and $\rmH^3$.

 We deduce the dihedral angles of the tetrahedron following similar reasoning to that of the previous section. The extra minus sign of Eq. \eqref{eq_Ohyp} has consequences only for the formulas that express the triple products of the normals in terms of connected traces (Eqs. \eqref{eq_tripleproducts} and \eqref{eq_tripleOOO}); the right-hand sides of these equations should be multiplied by -1. The formulas for the dihedral angles, which involve two normals, are only sensitive to the overall agreement in sign of the triple products, which is granted in both the spherical and hyperbolic cases. This latter fact will be crucial in the following.

Once the dihedral angles have been calculated, the tetrahedron can be straightforwardly reconstructed using the hyperbolic law of cosines:
\begin{subequations}
\begin{align}
\cos \widehat{C} & = - \frac{\cosh \wideparen{AB} - \cosh\wideparen{AC} \cosh\wideparen{BC}}{\sinh\wideparen{AC} \sinh\wideparen{BC}}\,,%
\label{eq_hypcos}\\
{}\notag\\
\cosh \wideparen{AB} & = \frac{\cos \widehat C+ \cos \widehat A \cos \widehat B}{\sin \widehat{A} \sin\widehat{B}}\,.%
\label{eq_hypcosinv}
\end{align}
\label{eq_hypcoslaw}
\end{subequations}
Note the extra minus sign in the first equation. These formulas conclude the list of ingredients needed for the reconstruction in the hyperbolic case.

In section \ref{sec_twosheeted}, however, we shall see that these ingredients are not quite enough to cover all the possible hyperbolic cases naturally arising from the closure equation. A new generalization of hyperbolic geometry has to be introduced.


\section{Spherical or hyperbolic? The Gram matrix criterion}\label{sec_gram}

Up to now we have described two possible reconstruction procedures, one for spherical and one for hyperbolic tetrahedra. Nonetheless, the starting point we are proposing for the reconstruction theorem is the same closure equation, Eq. \eqref{eq_closure}. The natural question arises, whether there is an \textit{a priori} criterion to decide which type of tetrahedron one should reconstruct when given only the four closing holonomies (and the choice of a special edge). There is such a criterion. The key is the unambiguousness character of the dihedral angles discussed above: once a sign (either for the moment) of the four triple products has been fixed, the dihedral angles of the curved tetrahedron are uniquely determined, irregardless of the curvature. But also, the dihedral angles encode all the necessary information to reconstruct the full tetrahedron, including its curvature. In this section we briefly review how the curvature can be deduced from the dihedral angles alone. While part of this is standard, it allows us to introduce concepts and notation useful in the following section. 


To begin, we reverse the logic, and suppose we are actually given a tetrahedron, flatly embedded in a space of constant positive, negative or null curvature. Then, define its Gram matrix, as the matrix of cosines of its (external) dihedral angles:
\begin{align}
\Gr _{\ell m} := \cos \theta_{\ell m} \quad\text{for } \ell\neq m, \quad\text{and}\quad \Gr_{\ell\ell} := 1 \quad \forall \ell\,.
\end{align}
One of the main properties of the Gram matrix is that the sign of its determinant reflects the spherical, hyperbolic, or flat nature of the tetrahedron:
\begin{align}
\sgn \det \Gr =
\left\{ \begin{tabular}{rl}
-1 & \text{if the tetrahedron is hyperbolic}\\
0 & \text{if it is flat}\\
+1 & \text{if it is spherical}
\end{tabular}\right. .
\end{align}
A straightforward way to understand this result is by embedding in $\mathbb R^4$. 

Let us start from the flat case, the simplest one. In this case, the parallel transport is trivial and $\Gr^\text{(flat)}_{\ell m} = \hat n_\ell \cdot \hat n_m = \delta_{ij} n^i_\ell n^j_m$. By introducing the four 4-vectors $N_\ell = (0,\hat n_\ell)$, one can write the $\Gr$ matrix in terms of the $4\times4$ matrix $N$ whose components are $N^\mu_{\phantom\mu\ell}$, we denote the component index with greek letters ranging from 1 to 4, and:
\begin{align}
\det \Gr^\text{(flat)} \equiv \det N^T N = (\det N)^2 = 0.
\end{align}
The last equality follows from the obvious fact that the $N_\ell$ are not linearly independent, since they are just four 3-vectors in disguise.  Nonetheless, these 4-vectors have a useful geometric interpretation; imagine the tetrahedron as embedded in the model space $\rmE_U^3 \subset \mathbb R^4$ orthogonal to the 4-vector $U:=(1,0,0,0)$. Then, each $N_\ell$ is the 4-normal to another hyper-plane in $\mathbb R^4$ that picks out a face of the tetrahedron when it intersects $\rmE_U^3$ orthogonally. This is depicted in one lower dimension in \autoref{fig_triangles}, where it is also clear that the (cosine of the) hyper-dihedral angle $\delta_{\mu\nu} N_\ell^\mu N_m^\nu$ is equal to the (cosine of the) tetrahedron's dihedral angle $\theta_{\ell m}$. Note that in this case, using the Euclidean ($\delta_{\mu\nu}$) or Lorentzian metric ($\eta_{\mu\nu}=\mathrm{diag}(-1,1,1,1)$) does not make any difference, since $N^0_\ell =0$. This reflects the fact that the flat case is a degenerate version of both the spherical and hyperbolic geometries. 

\begin{figure}%
\begin{center}%
\includegraphics[width=0.9\textwidth]{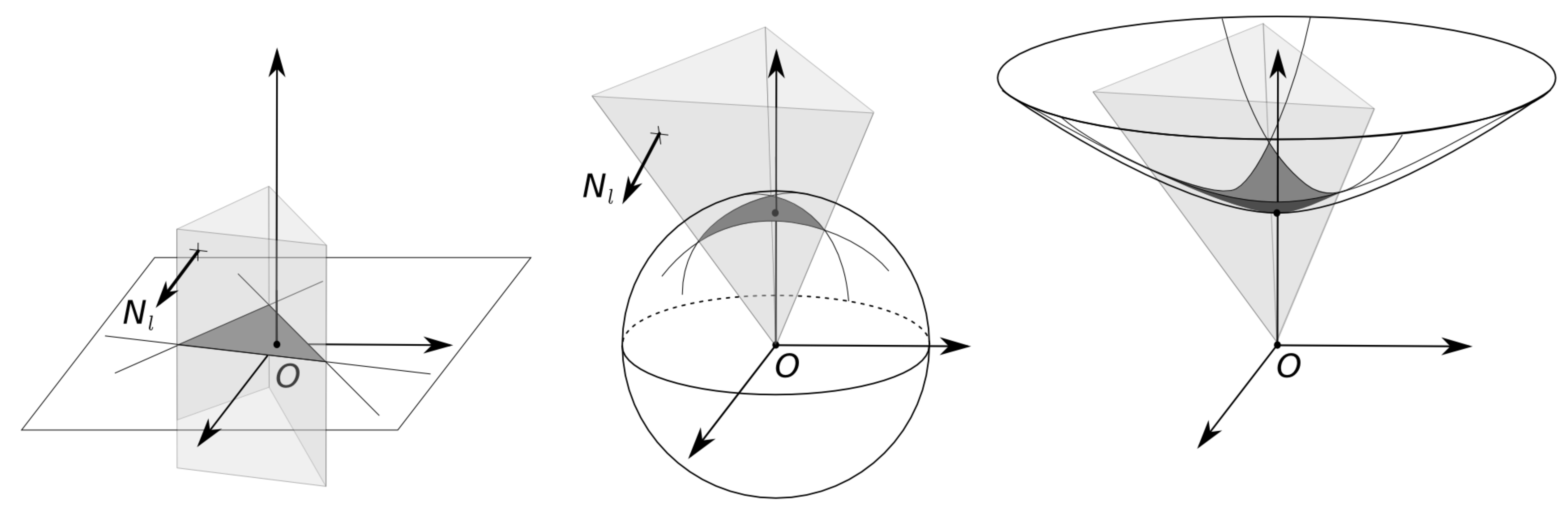}\\ \vspace{.5cm}
\includegraphics[height=3cm]{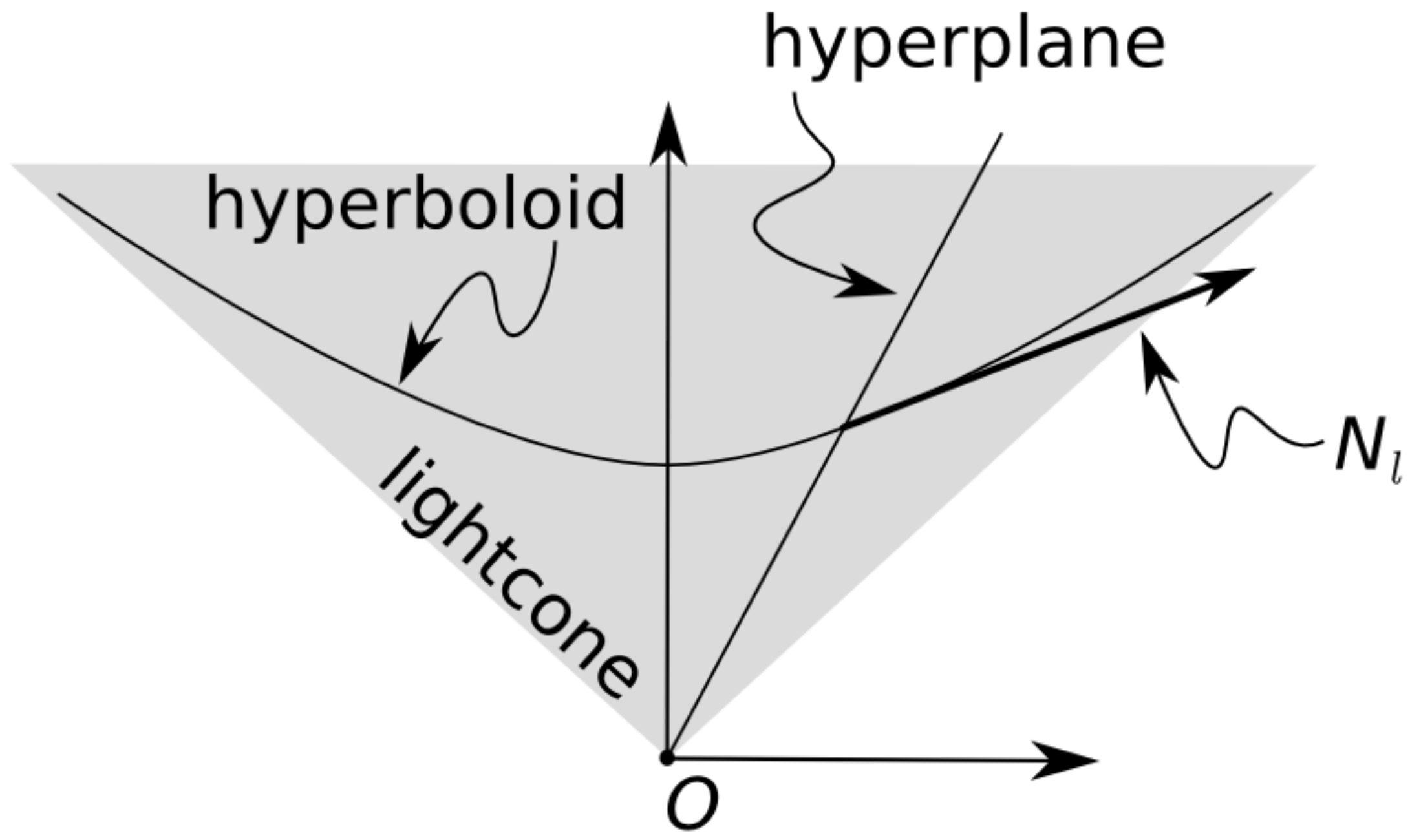}
\caption{A one-dimension lower representation of a flat, a spherical, and a hyperbolic tetrahedron as embedded in $\mathrm{E}^3_U\subset\mathbb{R}^4$, $\mathrm{S}^3_u\subset\mathbb{R}^4$, and $\mathrm{H}^3_u\subset\mathbb{R}^4$, respectively. The lower picture shows a section of the hyperbolic case to highlight the Lorentzian representation used for the hyperboloid.}%
\label{fig_triangles}
\end{center}%
\end{figure}%

In the spherical case, one embeds the tetrahedron into the unit sphere
$\rmS_u^3\subset\mathbb R^4$. The $\ell$-th face of the curved tetrahedron will then lie on a great 2-sphere of $\rmS_u^3$ identified by the intersection of the unit sphere with a hyper-plane passing through the origin of $\mathbb R^4$ and orthogonal to the 4-vector $N_\ell$ (we use the same symbol as in the flat case). Once more, the (cosine of the) hyper-dihedral angle $\delta_{\mu\nu} N_\ell^\mu N_m^\nu$ is equal to the (cosine of the) dihedral angle $\theta_{\ell m}$ between the faces $\ell$ and $m$ of the tetrahedron (provided orientations are chosen consistently). This fact gives the relation
\begin{align}
\Gr_{\ell m}^\text{(sph)} = \delta_{\mu\nu} N^\mu_\ell N^\nu_m\,,
\end{align}
from which it follows
 \begin{align}
\det \Gr^\text{(sph)} = (\det N)^2 >0 , 
\end{align}
where the zero value has been excluded because it would correspond to a degenerate tetrahedron, which we will not treat here. 
  
The easiest way to understand the hyperbolic case (see \autoref{fig_triangles}), is in terms of a ``Wick rotation'' of the spherical one. One obtains
\begin{align}
\Gr_{\ell m}^\text{(hyp)} = \eta_{\mu\nu} N^\mu_\ell N^\nu_m\,,
\end{align}
from which it follows
 \begin{align}
\det \Gr^\text{(hyp)} = (\det \eta)(\det N)^2 <0 , 
\end{align}
since $\det\eta = -1$. Here too, the zero value has been excluded because it corresponds to degenerate cases. The new metric is needed because the Euclidean normals to the planes that intersect the unit hyperboloid
$\rmH^3_u\subset \mathbb R^4$ are not tangent to the hyperboloid at the points of contact, and therefore the Euclidean scalar product between these normals does not reproduce the tetrahedron's Gram matrix. Related to this, there is the fact that the hyperboloid of \autoref{fig_triangles} has negative curvature only when calculated within the Lorentzian metric (time direction pointing upwards in the figure).

Interestingly, there is a direct way to calculate the sign of the determinant of the Gram matrix just in terms of the holonomies $O_\ell$ and the choice of a special edge. We are free to choose the special vertex 4 of the curved tetrahedron to be located at the north p\^ole of $\rmS_u^3$ (or of $\rmH^3_u$, respectively), in which case $N_\ell = (0, \hat n_\ell)$ for $\ell \in\{ 1,2,3 \}$, where the $\hat n_\ell$ are determined up to a global sign by the procedure discussed in the previous section. The last three components of $N_4$ are then completely determined by the equations
\begin{align}
\cos\theta_{4\ell}=\Gr_{4 \ell} = g_{\mu\nu} N^\mu_4 N^\nu_\ell = \delta_{ij} N^i_4 N^j_\ell \,, \quad \text{with } \ell\neq4\,,
\end{align}
where $\cos \theta_{4\ell}$ is given by Eq. \eqref{eq_dihedral} and $g_{\mu\nu}$ can be either $\delta_{\mu\nu}$ or $\eta_{\mu\nu}$. Explicitly: 
\begin{align}
  N_4^i =  \frac{1}{(\hat n_1 \times \hat n_2)\cdot \hat n_3}\Big[
  \cos\theta_{41}\, \hat n_2\times\hat n_3 
+ \cos\theta_{42}\, \hat n_3\times\hat n_1 
+ \cos\theta_{43}\, \hat n_1\times\hat n_2   
\Big].
\end{align}
Hence, by using the condition that $N_4$ must be of unit norm, in either the Euclidean or the Lorentzian metric, it is easy to realize that the sign of the determinant of the Gram matrix is given by
\begin{align}
\sgn \det \Gr = \sgn \left( 1 - \delta_{ij} N_4^i N_4^j \;\right).
\end{align}

Now that we have been able to determine \textit{a priori} the nature of the curved tetrahedron, we can run the correct form of the reconstruction according to whether the holonomies turn out to be associated with a non-degenerate spherical ($\det\Gr>0$) or hyperbolic ($\det\Gr<0$) geometry. If $\det\Gr=0$, our equations should be interpreted as some sort of degenerate spherical or hyperbolic geometry, which we do not attempt to reconstruct. Indeed,  they cannot correspond to a flat tetrahedron, because in that case all holonomies should be trivial, irregardless of the shape of the tetrahedron!

In conclusion notice that one can attempt to reverse the logic presented here, by taking the four hyper-planes identified by the $N_\ell$ as the primitive variables, instead of the tetrahedron. Doing so, the above construction identifies in the \textit{spherical} case not one but 16 different tetrahedra on $\rmS_u^3$, with antipodal pairs congruent.\footnote{To visualize this, it is easier to think of a 2-sphere cut by three planes passing through its center: it gets subdivided into 8 triangles.} The way we have defined the Gram matrix picks out only one of these tetrahedra, the one for which all four normals induced by the $N_\ell$ are outgoing. Choosing one among these 16 tetrahedra is somewhat analogous to fixing the signs of the four triple products discussed in the previous section. Also, it is interesting to note that in the flat case this multiplicity does not appear, provided the tetrahedra ``opened up towards infinity'' are disallowed. What about the hyperbolic case? On the (one sheeted-)hyperbolid the situation is analogous to the flat case; however, by looking at the hyperboloid as a sort of analytical continuation---we do not intend to be precise about this claim---of the sphere, one might expect to find again a remnant of the 16-fold multiplicity. A moment of reflection shows that the tetrahedra crossing the equator in the spherical case are ``broken up'' into two pieces, both extending to infinity, and contained in the two seperate sheets of the two-sheeted hyperboloid. In the next section we shall see why these two-sheeted tetrahedra are of interest for the curved reconstruction theorem.

\section{Two-sheeted hyperbolic tetrahedra}\label{sec_twosheeted}

A well-known result, easily deduced from the Gau\ss-Bonnet theorem, is that the area of an hyperbolic triangle cannot be larger than $\pi$ and is given by  $a=\pi-\sum_{n=1}^3\alpha_n\leq\pi$, where $\alpha_n$ are the triangle's internal angles. (We use $a$ throughout for triangle areas and rely on context to distinguish the geometry as spherical, hyperbolic or Euclidean.) The bound is saturated by ideal triangles, i.e. triangles with vertices ``at infinity''.
Nonetheless, an $\SO$ element representing a rotation around some fixed oriented axis 
is generally between 0 and $2\pi$, which means that the areas encoded in the holonomies $O_\ell$ generally range over these values. Spherical triangles achieve this full range of areas, but standard hyperbolic triangles do not. Is there something forcing the areas to be smaller than $\pi$ when the determinant of the Gram matrix, seen as a function of the four holonomies, is negative? It is not hard to find examples showing that there is not. Consequently, we need to make sense of hyperbolic tetrahedra with face ``areas'' in the full range $(0,2\pi)$. Inspired by the observations at the end of the previous section, we look to use triangles stretching across the two sheets. The aim of this section is to describe these new two-sheeted hyperbolic triangles and tetrahedra.

\begin{figure}%
\begin{center}%
\includegraphics[width=.95\textwidth]{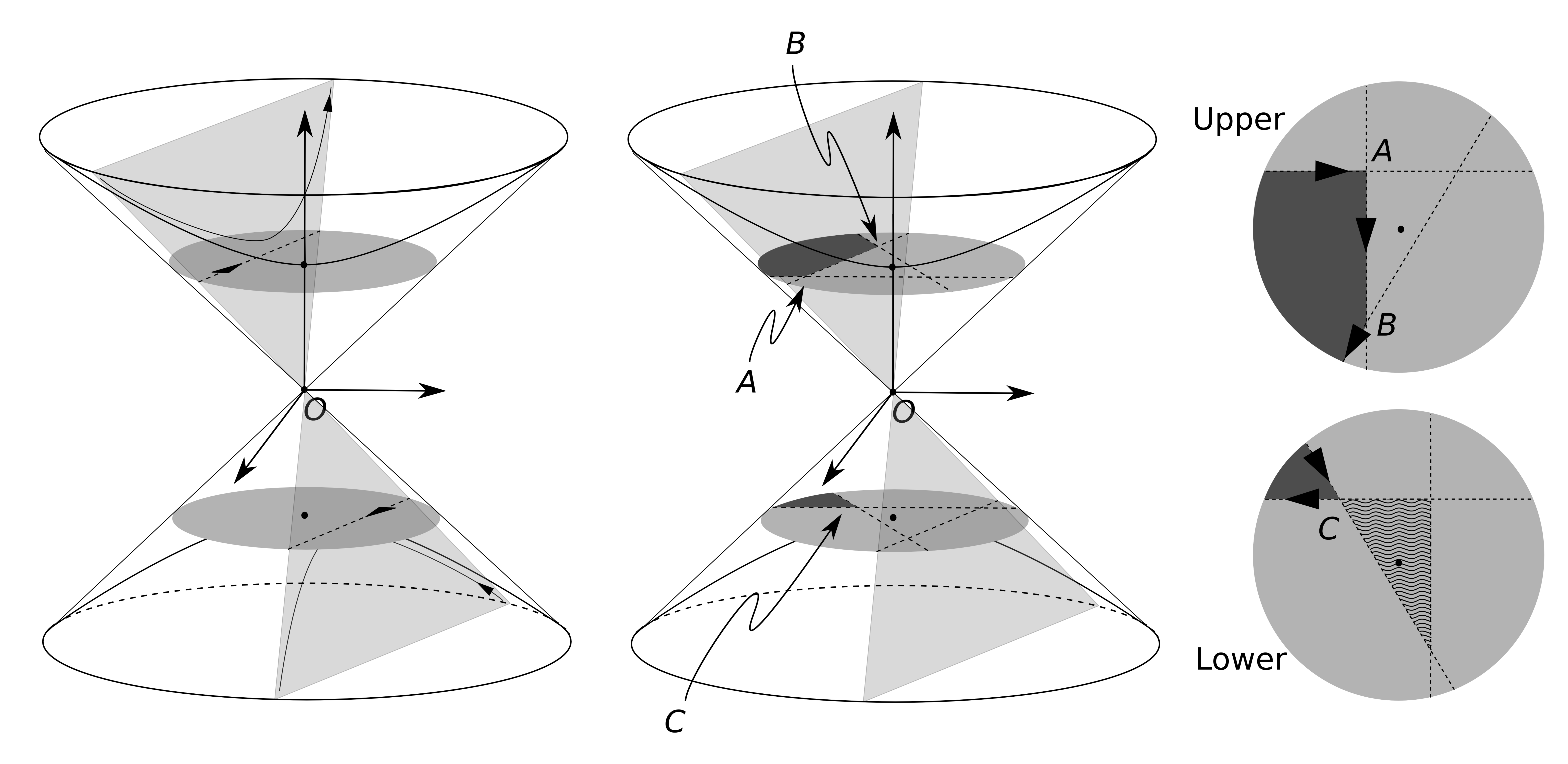}
\caption{The two-sheeted hyperboloid and triangle. \textbf{Left} A plane passing through the origin of the embedding space $\mathbb R^3$ (light gray) intersects the hyperboloid along two geodesics. The dark gray disk gives the Beltrami-Klein model of 2-dimensional hyperbolic space. Geodesics on the hyperboloid are mapped onto straight lines of the Beltrami-Klein disk (dashed lines). In contrast to the Poincar\'e disk model, the Beltrami-Klein model does not preserve angles. \textbf{Center and Right} In dark grey, a two-sheeted triangle. The right-most figure shows the two Beltrami-Klein disks as seen from the origin of $\mathbb R^4$, therefore a positively oriented triangle has a right-handed down-ward pointing orientation with respect to the plane of the page. The central, one-sheeted triangle in the lower sheet is shaded for future reference.}%
\label{fig_twosheeted}
\vspace{4em}
\includegraphics[width=.95\textwidth]{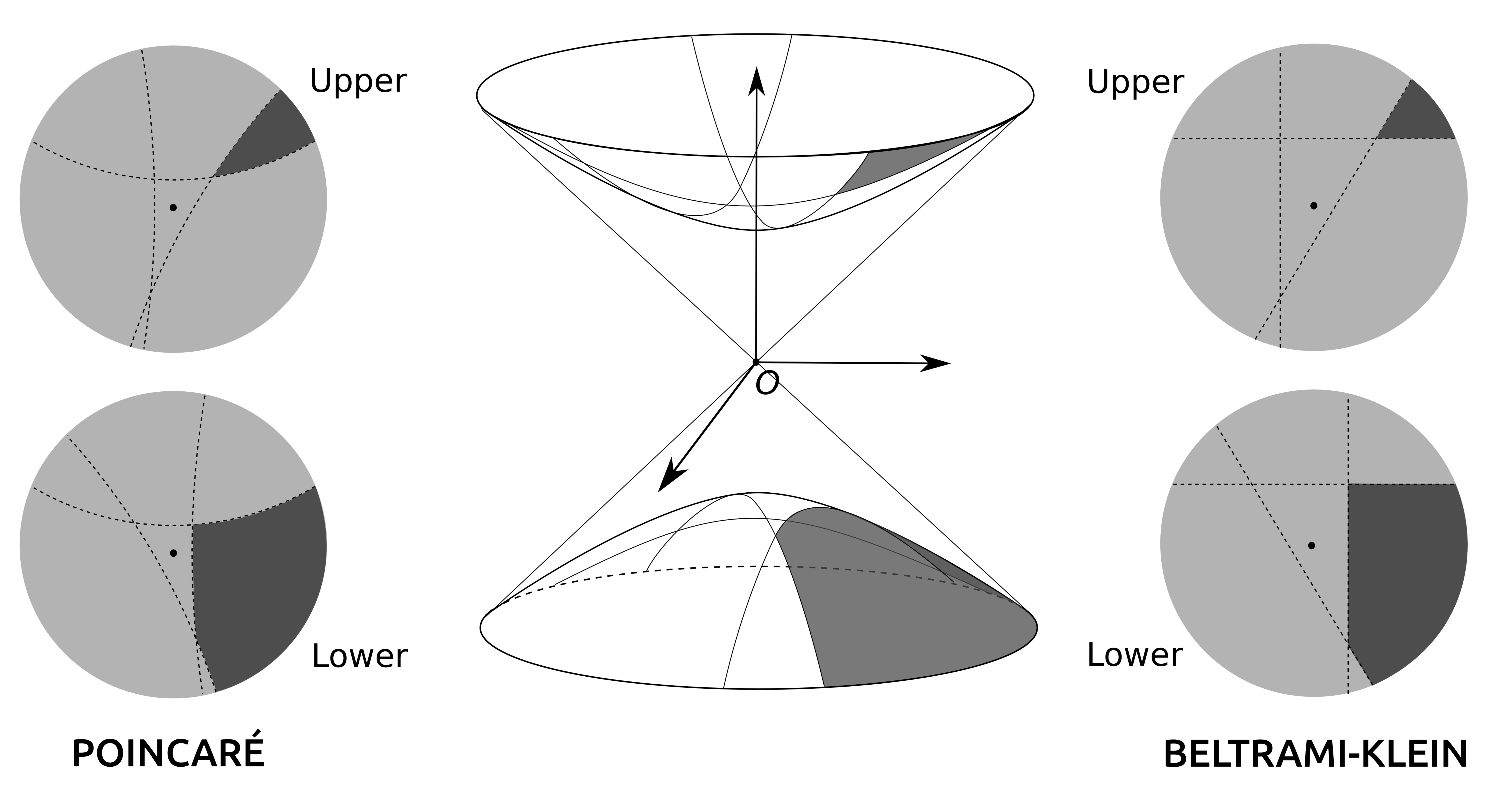}
\caption{Another representation of  a two-sheeted triangle. In the central image the shape of the actual hyperbolic triangle is highlighted. On the sides we represent the triangle in both the Poicar\'e and Beltrami-Klein models. The Poincar\'e model has the virtue of preserving angles, while the Beltrami-Klein models represents geodesics via straight lines and is easily recovered via the intersecting-plane construction shown in the previous picture.}%
\label{fig_twosheeted_bis}
\end{center}%
\end{figure}%

We start with the 2-dimensional triangles. In Figures \ref{fig_twosheeted} and \ref{fig_twosheeted_bis} we have illustrated what we mean by a two-sheeted triangle, and how to orient them. The key idea, is to use the planes passing through the origin of the embedding space $\mathbb R^3$ to extend the geodesics beyond infinity to the other sheet, and to use the natural orientation of the hyperbolae provided by Lorentz boosts which is also consistent with the orientation induced by that of the planes. Figures \ref{fig_twosheeted} and \ref{fig_twosheeted_bis} represent a two-dimensional hyperbolic geometry, and hence the geometry of the faces of a hyperbolic tetrahedron. In three dimensions, the Beltrami-Klein disk model becomes a three-ball model, in which the two-dimensional hyperboloids where the faces of the tetrahedron lie are mapped onto flat disks inscribed in the three-ball; it is one of these that is pictured.

The area of a two-sheeted triangle, however, is not just larger than $\pi$, it is actually infinite. Nonetheless, what appears implicitly in the closure equation is not the area of the triangle, but the total deficit angle perceived by an observer going around it. In an homogeneously curved geometry this happens to be proportional to the area. Therefore, by \textit{defining} a notion of holonomy around a two-sheeted triangle, we effectively provide a notion of ``renormalized'' area for these triangles. At the end of this section, we briefly comment on how far this idea can be pushed.

In order to define a holonomy around a two-sheeted triangle, it is enough to give a prescription for the parallel transport through infinity from one sheet to the other. In other words, one needs to identify the tangent spaces at the point $P$ and $P'$ on the boundaries of the two Poincar\'e or Beltrami-Klein disks, or balls in three dimensions. However, given a geodesics and its extension to the other sheet, there is a very natural prescription for the identification of the aforementioned tangent spaces (see the left columns in each panel of  \autoref{fig_paralleltr_infty}, as well as \autoref{fig_twosheeted}). %
\begin{figure}%
\begin{center}%
\includegraphics[width=.95\textwidth]{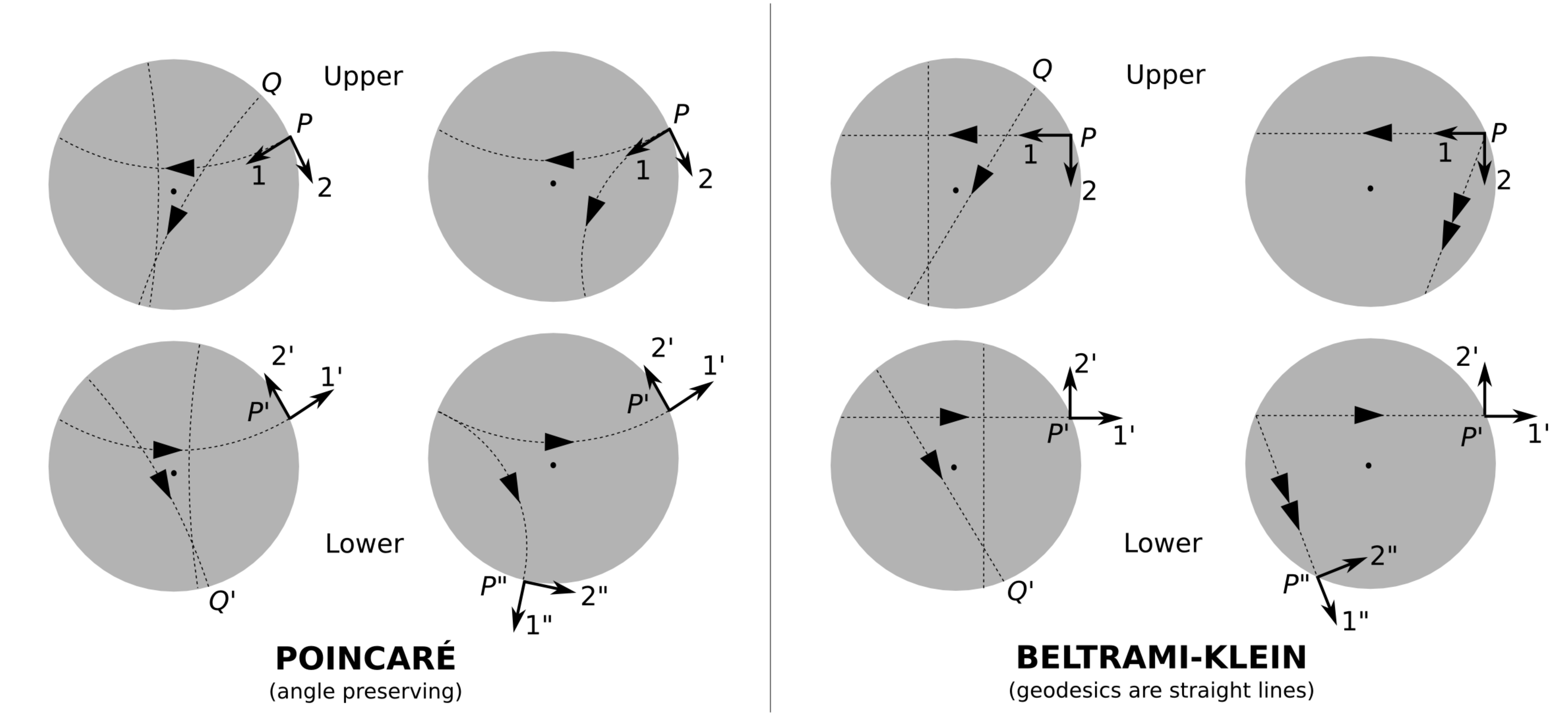}
\caption{The identification across the two sheets of boundary points along the \textit{same} geodesics. The right column of each panel shows what happens when $Q$, belonging to a different geodesics, is let ``collapse'' onto $P$: the point $P$ has to be identified with two different points according to which geodesics they belong to. In this sense, one is not allowed to think of ``gluing'' the two discs together to perform the identification. The arrows, show how a frame is parallel transported across the two disks. Arrow 2 represents the projection onto the disk of a normal to the plane of \autoref{fig_twosheeted}.}%
\label{fig_paralleltr_infty}
\end{center}%
\end{figure}%
%

 This requires that: (\textit{i}) the velocity vector along the a geodesic going out to infinity is identified with the incoming velocity vector on the geodesic's continuation, (\textit{ii}) the vector normal to the outgoing geodesic and pointing towards the interior of a two-sheeted triangle is identified with the only vector with both these properties at the entering point of the incoming geodesics on the other sheet. The second requirement simply preserves the notions of in and outside. In the embedded picture, it requires the vector normal to the geodesic to lie on the same side of the hyperplane defining the geodesic itself. Notice that by orienting the normals to the upper and lower sheets as future and past pointing respectively, the three-dimensional frame composed by the velocity, the normal vector to hyperplane, and the normal to the hyperboloid preserves its orientation thanks to this requirement. This construction can be generalized to the three-dimensional two-sheeted hyperboloid, by considering the normals to the flatly embedded surfaces defining the faces of the two-sheeted tetrahedron instead of the normal to the geodesic arcs defining the sides of the two sheeted triangles.

Does the identification of $P$ and $P'$ and their tangent spaces obtained while moving \textit{along a given geodesic} induce an identification of the boundaries of the two Beltrami-Klein disks (balls)? No. The reason is shown in the right columns of each panel of \autoref{fig_paralleltr_infty}: a point $P$ on the upper sheet is identified with different points on the lower sheet depending on the geodesics through which the point is reached. Therefore, specializing to the relevant 3-dimensional case, the parallel transport prescription we give, instead of identifying the boundaries of the two balls $\partial \mathrm{BK}^3_\text{Upper}$ and $\partial \mathrm{BK}^3_\text{Lower}$, provides a 1-to-1 map between the spaces $\partial \mathrm{BK}_\text{Upper}\times \rmS^2$ and  $\partial \mathrm{BK}_\text{Lower}\times \rmS^2$. Here $\rmS^2$ labels the space of geodesics based at a point on $\partial \mathrm{BK}_\text{Upper,Lower}$.

Having fixed the parallel transport prescription, finding the holonomy around a triangle is just a matter of calculation. A particularly simple way to find the holonomy in the standard case is to note that the parallel transport along a geodesic is trivial, and the only non-trivial contributions come from the ``kinks'' at the vertices of the triangle. Pleasantly, this remains true here because nothing happens when parallel transporting a frame across the two sheets. If the triangle lies completely within one sheet (or on the surface of a sphere) each kink contributes to the final holonomy with a rotation (around the normal to the surface) through an angle $-\tilde\alpha$, where $\tilde\alpha$ is the angle between the velocity vectors before and after the kink. After circuiting a triangle the total rotation amounts to $- \left(3\pi - \sum_{n=1}^3 \alpha_n\right)$, with $\alpha_n=\pi-\tilde\alpha_n$ being the \textit{internal} angles of the triangle. Then, in the case of a spherical triangle we simply obtain its area (modulo $2\pi$) $a= \left(\sum_{n=1}^3 \alpha_n - \pi\right)$. Similarly, for a one-sheeted hyperbolic  triangle, we obtain (again modulo $2\pi$) minus its area $a_{\text{1s}}=\left(\pi - \sum_{n=1}^3 \alpha_n\right)$. However, if the triangle is hyperbolic and two-sheeted, we \textit{define} its ``renormalized'' area through the parallel transport prescription we just outlined, obtaining the formula:
\begin{align}
a_\text{2s}:= 3\pi-\sum_{n=1}^3\alpha_n\;.
\end{align}
To avoid confusion, we will call $a_\text{2s}$ the \textit{holonomy area} of the two-sheeted triangle.  This area is in the range $a_\text{2s}\in(0,2\pi)$. To show this note that the internal angles of the two-sheeted triangle are related to those of the unique (up to congruence) one-sheeted triangle identified by continuing its geodesic sides, see the rightmost, lower panel of Figure \ref{fig_twosheeted}, where the relevant one sheeted triangle is dashed. Calling the angles of the latter triangle $\alpha$, $\beta$, and $\gamma$, where $\gamma$ is the only angle the two triangles have in common, and its area $a_\text{1s}$, one finds:
\begin{align}
a_\text{2s} &= 3\pi - (\pi-\alpha) - (\pi-\beta) - \gamma  = \pi + \alpha + \beta -\gamma  \notag\\
& = 2\pi - a_\text{1s} - 2\gamma \qquad\qquad\qquad\qquad\qquad\qquad < 2\pi \\
& = a_\text{1s} + 2\alpha + 2\beta \qquad\qquad\qquad\qquad\qquad\qquad >0 \notag\,.
\end{align}
The same result could have been obtained by using the simple observation that the holonomy area of a (necessarily two-sheeted) hyperbolic lune of width $\gamma$ is $a_\text{hyp.lune}= 2\pi-2\gamma$ (to be compared to the spherical case: $a_\text{sph.lune}=2\gamma$). We observe that the holonomy areas only make sense modulo $2\pi$, and can only be calculated for regions whose boundaries are arbitrarily well approximated by piecewise geodesics lines. Consequently, it is not possible to make sense of the holonomy area of a full hyperbolic sheet, and the total area of the two sheets is zero, since it is ``enclosed'' by the trivial loop. Nonetheless, given that the starting point of our reconstruction theorem are the holonomies themselves, and not arbitrary regions of the two-sheeted hyperboloid, these definitions are appropriate and useful.

The generalization of this construction to higher dimensions, and in particular to two-sheeted tetrahedra, is straightforward: these tetrahedra are regions of the two-sheeted 3-hyperboloid identified by four points on it, the vertices, and delimited by the intersections of the hyperboloid with the hyperplanes generated by triplets of vertex 4-vectors. Note that to completely characterize the tetrahedron, one has to specify the orientations of the planes. 


\section{Curved Minkowski Theorem for tetrahedra}\label{sec_theorem}

Now that the geometric picture has been clarified, we can state and finally prove the curved Minkowski theorem for tetrahedra. When we want to emphasize that the Gram matrix can be caclulated directly from the holonomies $\{ O_\ell \}$, e.g. using Eq. \eqref{eq_scalarOO} and related expressions, we write $\Gr(O_{\ell})$. 

\begin{theorem}\label{thm_mink}
Four $\SO$ group elements $O_\ell$, $\ell=1,\dots,4$ satisfying the closure equation $O_4 O_3 O_2 O_1 = \mathrm{e}$, can be used to reconstruct a unique generalized (i.e. possibly two-sheeted in the hyperbolic case) constantly-curved convex tetrahedron, provided:
\begin{itemize}
\item[\textnormal{(i)}] the $\{O_\ell\}$ are interpreted as the Levi-Civita holonomies around the faces of the tetrahedron,
\item[\textnormal{(ii)}] the path followed around the faces is of the so-called ``simple'' type (see \autoref{sec_strategy}), and has been uniquely fixed by the choice of one of the two couples of faces (24) or (13),
\item[\textnormal{(iii)}] the orientation of the tetrahedron is fixed and agrees with that of the paths used to calculate the holonomies,
\item[\textnormal{(iv)}] the non-degeneracy condition $\det\Gr(O_\ell) \neq 0$ is satisfied.
\end{itemize}
The uniqueness is understood to be modulo isometries. 

\end{theorem}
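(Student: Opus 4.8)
The plan is to organize the argument into an existence part and a uniqueness part, using the machinery already developed in Sections~\ref{sec_strategy}--\ref{sec_twosheeted}. For existence, I would proceed constructively. First, from the four holonomies $\{O_\ell\}$ extract the normals $\{\hat n_\ell(4)\}$ and areas $\{a_\ell\}$ via Eq.~\eqref{eq_O} (or its hyperbolic analogue Eq.~\eqref{eq_Ohyp}), noting that each $O_\ell$ determines an oriented axis and a rotation angle, hence a pair $(a_\ell,\hat n_\ell)$ up to the ambiguity $\{a_\ell,\hat n_\ell\}\mapsto\{2\pi-a_\ell,-\hat n_\ell\}$ of Eq.~\eqref{eq_ambiguity}. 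Fix the special edge $(24)$ as in hypothesis~(ii), base everything at vertex~$4$, and use the closure equation together with the simple paths to define $\hat n_4(4)$ through parallel transport along $(24)$. Then use the triple-product positivity conditions, Eq.~\eqref{eq_triple}, to remove the sign ambiguity: as argued below Eq.~\eqref{eq_triple}, exactly one of the $2^4$ sign choices for the $\{\hat n_\ell\}$ is compatible with convexity, so the normals and dihedral angles $\cos\theta_{\ell m}$ of Eq.~\eqref{eq_dihedral} are now unambiguously determined.

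Next, form the Gram matrix $\Gr(O_\ell)_{\ell m}=\cos\theta_{\ell m}$ and invoke hypothesis~(iv): since $\det\Gr\neq 0$, the criterion of Section~\ref{sec_gram} tells us whether to reconstruct a spherical ($\det\Gr>0$) or a (possibly two-sheeted) hyperbolic ($\det\Gr<0$) tetrahedron. In either case I would run the reconstruction from the dihedral angles alone: put an infinitesimal sphere around each vertex, apply the spherical or hyperbolic law of cosines, Eq.~\eqref{eq_sphcos} or Eq.~\eqref{eq_hypcos}, to get the three face angles at that vertex from the three dihedral angles meeting there, and then apply the inverse law, Eq.~\eqref{eq_sphcosinv} or Eq.~\eqref{eq_hypcosinv}, to convert face angles into edge lengths. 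One must check Gram-matrix consistency of the four vertex triangles (the edge shared by two vertices receives the same length from both), which follows from $\det\Gr\neq 0$ together with the embedding picture of Section~\ref{sec_gram}: the $\{N_\ell\}$ span $\R^4$ with the appropriate signature, so the tetrahedron with those face 4-normals exists and is non-degenerate. In the hyperbolic case with some $a_\ell>\pi$, the relevant face is realized as a two-sheeted triangle in the sense of Section~\ref{sec_twosheeted}, and the holonomy area $a_\text{2s}=3\pi-\sum_n\alpha_n$ matches the rotation angle read off from $O_\ell$. Finally I would verify the consistency of the areas $\{a_\ell\}$ with the reconstructed geometry --- this is the one genuinely new check promised in Section~\ref{sec_strategy} --- by computing the area enclosed by each reconstructed face from its reconstructed edge lengths (via Gau{\ss}--Bonnet / the angle-sum formula) and confirming it equals $a_\ell$ mod $2\pi$; this closes the loop between the data fed in and the object produced.

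For uniqueness modulo isometries, I would argue that every conjugation-invariant datum extractable from the $\{O_\ell\}$ --- the areas, all dihedral angles, all triple products --- is an isometry invariant of the reconstructed tetrahedron, and conversely these determine the tetrahedron up to isometry: a constantly-curved tetrahedron is rigid, being determined by its six edge lengths (or equivalently its Gram matrix, up to the global reflection fixed by the triple-product signs). The residual conjugation freedom $O_\ell\mapsto RO_\ell R^{-1}$ of Section~\ref{sec_strategy} corresponds precisely to the isometry freedom (change of frame at the base point), and the gauge equivalence of the choices $(24)$ and $(13)$ as special edges, also noted there, shows hypothesis~(ii) introduces no spurious dependence beyond what is stated.

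The main obstacle I expect is the area-consistency step: the reconstruction uses only the dihedral angles, so one must independently prove that the rotation angle of $O_\ell$ equals the (possibly renormalized, possibly two-sheeted) area of the $\ell$-th reconstructed face, rather than merely its supplement or some other branch. Handling the two-sheeted hyperbolic case uniformly with the one-sheeted and spherical cases --- so that the single formula coming out of the law of cosines picks out the correct triangle type dictated by the sign data of Section~\ref{sec_gram} --- is where the bookkeeping of orientations, of the ambiguity in Eq.~\eqref{eq_ambiguity}, and of the holonomy-area convention $a_\text{2s}=3\pi-\sum\alpha_n$ all have to be made to agree at once.
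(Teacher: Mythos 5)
Your overall skeleton (fix signs by the triple products, build the Gram matrix, branch on $\sgn\det\Gr$, reconstruct, then worry about areas) matches the paper's, but two places where you wave at a check are exactly where the paper has to do real work, and your proposal does not supply the arguments.

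First, existence in the hyperbolic case. You say the $N_\ell$ ``span $\R^4$ with the appropriate signature, so the tetrahedron with those face 4-normals exists.'' That is not enough: after factoring $\Gr=N^T g N$ with $g=\eta$, the candidate vertices are the directions $W_\ell$ with $W=-(N^{-1})^T$, and a priori nothing guarantees these rays are \emph{timelike}, i.e.\ that they actually meet the two-sheeted hyperboloid $\rmH^3_u$ at all. The paper closes this hole with Lemma \ref{lemma_minors} (positivity of all $3\times3$ principal minors of $\Gr$, proved by exhibiting each as a squared determinant of transported normals) combined with the cofactor identity $\left(W_\ell\right)^T\eta W_\ell=(\Gr^{-1})_{\ell\ell}=(\det\Gr)^{-1}m_\ell<0$, Eq.~\eqref{eq_Ws}. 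Your route via the vertex figures and the hyperbolic law of cosines has the same problem in a different guise: for a two-sheeted face the vertices lie on different sheets, the face is not a standard hyperbolic triangle, and Eqs.~\eqref{eq_hypcos}--\eqref{eq_hypcosinv} do not apply as written, so ``apply the inverse law to get edge lengths'' is not a well-defined step precisely in the new case the theorem is meant to cover.

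Second, and more seriously, the area-consistency step. You correctly identify it as the main obstacle, but your plan --- compute each face's area from the reconstructed edge lengths via Gau{\ss}--Bonnet and ``confirm'' it equals $a_\ell$ mod $2\pi$ --- is a statement of what must be true, not a proof of it; since the reconstruction used only the dihedral angles, there is no reason offered why the angle deficit of the reconstructed face should reproduce the rotation angle of the input $O_\ell$ rather than, say, the complementary branch. The paper's Part two proves this differently and non-trivially: it considers the holonomies $\tilde O_\ell$ computed along the simple paths of the reconstructed tetrahedron, gauge-fixes the frame so that $\hat n_3$ and $\hat n_1$ agree, shows $\hat n_2,\hat n_4$ are then forced by the Gram entries plus the triple-product inequalities, then fixes $a_1$ and $a_3$ from $\Gr_{24}=\hat n_2\cdot\bO_1\hat n_4=\hat n_2\cdot\bO_3^{-1}\hat n_4$ (again using the triple-product sign to pick the branch), fixes $a_2$ by rewriting the closure cyclically as $(O_3^{-1}O_4O_3)O_2O_1O_3=\mathrm e$ and reusing the same argument, and finally extracts $\cos a_4$ and $\sin a_4$ from the closure itself. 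In other words, the Gram matrix, the closure equation, the simple-path convention and the convexity inequalities together determine the holonomies uniquely, so $\tilde O_\ell=O_\ell$ up to conjugation and the areas automatically match. Without an argument of this kind (or a genuine substitute for it), your proposal leaves the central consistency claim of the theorem unproven.
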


In particular, condition \textit{(i)}  means that the $O_\ell$'s  written in the form $\exp \left\{ \pm a_\ell \hat n_\ell \cdot \vec J\;\right\}$  have the following geometrical interpretation: (1) the $a_\ell$ are the areas
of the faces of the tetrahedron (possibly interpreted as holonomy areas), and (2) the $\hat n_\ell$ are the outward pointing normals to these faces when parallel transported (along the simple path chosen) to a common reference frame. Also, it turns out that: (3) the tetrahedron has positive (negative) curvature if  $\det\Gr(O_\ell) >0$ ($\det\Gr(O_\ell)<0$, respectively); (4) the tetrahedron is double-sheeted if it has a negative curvature and the cofactors of the Gram matrix do not agree in sign. The proof is an extension of the formalism appearing at Eq. \eqref{eq_Ws} below.

Observe that the four conditions to be satisfied for the theorem to hold have distinct characters: condition \textit{(i)} is key to the theorem, it allows its geometric interpretation; condition \textit{(iii)} is simply needed to avoid the possibility of reconstructing the parity reversed tetrahedron as well; condition \textit{(iv)} is technical and, unfortunately, can be cumbersome from the point of view of the holonomies, since the Gram matrix is a nice object geometrically speaking, but not as simple algebraically; finally, condition \textit{(ii)} has a somewhat strange status. Indeed, a condition of this type is certainly needed to take care of the parallel transport ambiguities present in the curved setting, but at the same time the specific form we are employing looks quite arbitrary---even if inspired by a simplicity criterion---and in principle can be modified to other choices of paths, which would, in turn, require a few somewhat obvious modifications in the reconstruction procedure. The simple-path condition naturally arises in the four-dimensional context of \cite{HHKR}.

Before giving the proof of the main theorem we give a short proof of a useful lemma: 
\begin{lemma}\label{lemma_minors}
The principal minors of $\Gr$ are positive, with the exception of the $4\times4$ minor in the hyperbolic case.
\end{lemma}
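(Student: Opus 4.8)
The plan is to go through the principal minors of $\Gr$ size by size; only the $3\times3$ ones carry any real content. First I would dispatch the trivial cases: the four $1\times1$ principal minors are the diagonal entries $\Gr_{\ell\ell}=1>0$, and each of the six $2\times2$ principal minors equals $1-\cos^2\theta_{\ell m}=\sin^2\theta_{\ell m}$, which is strictly positive because convexity of the tetrahedron forces every dihedral angle into $(0,\pi)$ (a value $0$ or $\pi$ would mean two parallel faces, hence no genuine tetrahedron).

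Next, the four $3\times3$ principal minors. The key remark is that omitting one face from $\{1,2,3,4\}$ leaves three faces that share a common vertex $v$, so that $3\times3$ block of $\Gr$ has entries $\cos\theta_{ab}$ with $a,b$ ranging over those three faces. Parallel-transporting the corresponding outward unit normals to $v$ produces genuine vectors $\hat n_a(v)$ of the tangent space at $v$, which is honestly Euclidean, $\cong\R^3$, whatever the sign of the ambient curvature; by \eqref{eq_dihedral} their pairwise scalar products reproduce exactly the entries $\cos\theta_{ab}$ of the block — this includes the delicate angle $\theta_{24}$, which at the two vertices off the special edge reduces to an ordinary scalar product of normals (cf. \autoref{sec_strategy}). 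Hence that $3\times3$ block is the Euclidean Gram matrix of $\{\hat n_a(v)\}$, and its determinant equals $\big[(\hat n_i(v)\times\hat n_j(v))\cdot\hat n_k(v)\big]^2\ge 0$. It is strictly positive because the sign-fixing of the normals, \eqref{eq_triple}, requires this triple product to be nonzero at every vertex — equivalently, the three faces at each vertex are in general position, as must hold for the reconstructed tetrahedron to be non-degenerate.

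Finally, the single $4\times4$ principal minor is $\det\Gr$ itself: by the Gram-matrix criterion of \autoref{sec_gram} its sign is $+1$ for spherical and $-1$ for hyperbolic tetrahedra, and the value $0$ is excluded by the non-degeneracy hypothesis (iv); this is exactly the stated exception. The step I would be most careful about is the assertion that every $3\times3$ block genuinely realizes a Euclidean Gram matrix of three linearly independent vectors, uniformly across the spherical, one-sheeted and two-sheeted hyperbolic cases. This is not forced by the closure equation alone, but follows once the dihedral-angle formulas of \autoref{sec_strategy} and the convexity/triple-product conditions \eqref{eq_triple} are in force, which is guaranteed by hypotheses (i)–(iv). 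Everything else is routine.
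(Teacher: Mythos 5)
Your proof is correct and follows essentially the same route as the paper's: the $1\times1$ and $2\times2$ minors are immediate, and each $3\times3$ principal minor is recognized as the Euclidean Gram matrix of the three (suitably parallel-transported) unit normals of the faces sharing the omitted label's vertex, so it equals a squared triple product that is strictly positive by the convexity/sign-fixing conditions of Eq.~\eqref{eq_triple} --- exactly the paper's computation of $m_4,m_2$ and of $m_3,m_1$ via $\bO_1\hat n_4$ and $\bO_3^{-1}\hat n_4$. Your uniform ``omit face $d$, base at vertex $d$'' phrasing and the explicit remark on the $4\times4$ minor (handled by the Gram criterion and hypothesis \textit{(iv)}) are just tidier packaging of the same argument.
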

\begin{proof}
The $1\times1$ principal minors are immediate, since each is equal to 1. The $2\times2$ minors are also easily seen to be positive since they are equal to  $(1-\cos^2\theta_{\ell m})$, for the appropriate choice of indices $(\ell,m)$. Finally, to show that also the $3\times3$ minors are all positive, consider first the case of the principal minor $m_{4}$ equal to the determinant of the matrix obtained by erasing row and column 4 from $\Gr$:
\begin{align}
m_4 = \det \Big( \hat n_\ell . \hat n_m \Big) = \left[\det \left( \hat n_1 \Big| \hat n_2 \Big| \hat n_3 \right) \right]^2
\end{align}
where the unit vectors $\hat n_\ell$ are those appearing in Eqs. \eqref{eq_O} and \eqref{eq_Ohyp} (with signs fixed by the triple product criterion),\footnote{In the hyperbolic case, the triple product criterion described at the beginning of \autoref{thm_mink} gives the $\hat n_\ell$ signs opposite to the geometric ones. However, the Gram matrix is unaffected by this global change in sign. } and the matrix appearing at the furthest right is the matrix which has the three 3-vectors $\hat n_\ell$ as columns. In light of this formula $m_4$ is trivially positive. The same holds for $m_2$. A little more effort is needed to prove that $m_1$ and $m_3$ are also positive. Explicitly:
\begin{align}
m_3 = \det\left(
\begin{array}{ccc}
1 & \hat n_1 . \hat n_2  & \hat n_1 . \hat n_4 \\
 & 1 & \hat n_2 . \bO_1 \hat n_4 \\
 \text{SYM} &  & 1 
\end{array}
\right) = \left[\det \left( \hat n_1 \Big| \hat n_2 \Big| \bO_1\hat n_4 \right) \right]^2,
\end{align}
where in the first equality we used the definition of Eq. \eqref{eq_dihedral}, which takes into account the parallel transport of $\hat n_4$ to vertex 4 along the special edge; while in the second we made use of the fact that $\bO_1 \hat n_1 = \hat n_1$, and therefore $\hat n_1 . \hat n_4 = \hat n_1 . \bO_1\hat n_4$. Therefore, $m_3$ is positive. It can be shown that $m_1$ is positive by a very similar argument. 
\end{proof}

The proof of the theorem proceeds in a completely constructive way, and without loss of generality, it is performed within the explicit choice of edge $(24)$ being the special one. Most of the steps necessary for the reconstruction were explained in great detail in \autoref{sec_strategy}, and will not be discussed again. Our attention is focused on the well-definedness and unambiguous statement of each step of the reconstruction. We will also prove the consistency of the reconstruction procedure. Therefore the theorem is subdivided into two parts: in the first, we show that the $O_\ell$ uniquely identify a Gram matrix that, in turn, is associated to a unique curved tetrahedron; in the second part we show that the Levi-Civita holonomies around the four faces of the tetrahedron are necessarily given by the $O_\ell$ themselves. Loosely speaking, in the first part we extract from the closure relation and the simple-path condition the dihedral angles of a tetrahedron which uniquely determine it, and in the second we verify that the areas of the reconstructed tetrahedron are necessarily the same as those encoded in the initial group elements $O_\ell$.

\begin{proof} $\phantom{ciao}$\\
\textbf{Part one}
First, calculate the triple products appearing in Eq. \eqref{eq_triple} using the group elements via Eq. \eqref{eq_tripleOOO} (properly generalized in the way discussed in the first section for $\{\ell,m,q\}=\{ 1,2,4\}$ or $\{2,3,4\}$), and fix the signs $\pm_\ell$ appearing there by requiring  these four triple products to be positive (note that there is only one such choice). Geometrically, this completely fixes the signs of the normals by imposing the convexity of the tetrahedron.\footnote{Note that the so reconstructed normals would turn out to have the opposite sign with respect to the geometric ones in the hyperbolic case. This global flip in the sign of the normals does not compromise any of the following steps.} This allows the unambiguous specification of the entries of the (putative) Gram matrix 
\begin{align}
\Gr_{\ell m} :=\cos \theta_{\ell m}\quad\text{for $\ell\neq m$,} \quad \text{and} \quad \Gr_{\ell\ell} =1 \,,
\end{align}
with the right-hand side of the first equation being calculated via Eq. \eqref{eq_scalarOO} and its generalization for $\{\ell,m\}=\{2,4\}$. We stress that $\Gr$ is a function of the $O_\ell$'s only. Now,
\begin{align}
\text{either }\quad \sgn \det\Gr >0 \,,\quad \text{ or } \quad \sgn\det\Gr<0\,,
\end{align}
since the null case has been excluded by hypothesis. Define the $4\times4$ matrix $g=\text{diag}(\sgn\det\Gr,1,1,1)$, to be interpreted as the metric of the four-dimensional embedding space as described in \autoref{sec_gram}. Then, there exist four 4-vectors $N_\ell$ such that
\begin{align}
\Gr_{\ell m} = \sum_{\mu,\nu} g_{\mu\nu} N^\mu_\ell N^\nu_m ,
\end{align}
or more symbolically $\Gr= N^TgN$. In particular, $g_{\mu\nu}N^\mu_\ell N^\nu_\ell = 1$, and the four 4-vectors $N_\ell$ can be interpreted geometrically as the oriented unit normals to the hyperplanes passing through the origin of $\mathbb R^4$ which, upon intersection with the the unit sphere $\rmS^3_u \subset \mathbb R^4$ (unit two-sheeted hyperboloid $\rmH^3_u$, respectively), identify the great spheres (great hyperboloid, respectively) bounding the tetrahedron itself. {See the figures and discussion of  \autoref{sec_gram}.}

The vertices of the tetrahedron are located along the intersections of the triplets of hyperplanes normal to the $N_\ell$. Hence the matrix $W:=-(N^{-1})^T$ has columns $W_\ell$ proportional to the 4-vectors identifying the vertices of the tetrahedron (the minus sign in this formula fixes the correct sign of the vertex vectors). We define $V_\ell := W_\ell / \sqrt{|(W_\ell)^T g W_\ell |}$. In the spherical case, the vertex vectors $V_\ell$ completely characterize the tetrahedron;  they identify four points on the unit sphere $\rmS^3_u$ that can be connected by the shortest geodesic segments between them. However, in the hyperbolic case it is not \textit{a priori} clear that the $V_\ell$ intersect the two-sheeted unit hyperboloid $\rmH^3_u$. Indeed, for them to do so, they must be timelike, that is they must satisfy $V_\ell ^T \eta V_\ell = -1$. However, this is equivalent to the condition $W_\ell^T \eta W_\ell <0$, which in turn must be true because of the following relations and the result of  \textcolor{blue}{Lemma} \autoref{lemma_minors} (which states $m_\ell>0$ for all $\ell$):
\begin{align}
\label{eq_Ws}
\left(W_\ell\right)^T \eta W_\ell = \left( W^T\eta W \right)_{\ell\ell} = \left(\left( N^T \eta N \right)^{-1}\right)_{\ell\ell}  = \left(\Gr^{-1}\right)_{\ell\ell} = (\det\Gr)^{-1} m_\ell <0,
\end{align}
where, recall, $m_\ell$ is the principal minor obtained by erasing row and column 4 from $\Gr$. To obtain the last equality, the fact is used that being a diagonal minor, $m_\ell$ is also equal to the $(\ell,\ell)$-th cofactor of $\Gr$. Therefore, we can conclude that also in the hyperbolic case a unique generalized (i.e. possibly two-sheeted) tetrahedron can be identified. It suffices to define the ``shortest'' geodesic between two vertices as the generalized geodesic (i.e. possibly going through infinity) that does not pass through any point defined by the intersection of the hyperboloid and three of the four hyperplanes normals to the $\{N_\ell\}$ other than its initial and final points.
 This concludes the first part of the proof.\\

\textbf{Part two} 
The group elements $O_\ell$ and closure relation Eq. \eqref{eq_closure} specify more data than the Gram matrix alone. Thus, we have to verify the consistency of all of this data. Indeed, the construction from part one guarantees only that the dihedral angles of the reconstructed tetrahedron are compatible with the holonomy group elements $O_\ell$, but not that the reconstructed areas also match those encoded in the $O_\ell$. More specifically, we have claimed that the $O_\ell$ can be interpreted as holonomies of the Levi-Civita connection around the various faces of the tetrahedron, and this implies (see \autoref{sec_strategy} and \autoref{sec_twosheeted}) that the rotation angles of the $O_\ell$ are the areas of the faces of the tetrahedron. We now prove this claim.

Given the reconstructed tetrahedron, one can explicitly calculate the holonomies along the specific simple path on its 1-skeleton used in the reconstruction. Call these the reconstructed holonomies, $\tilde O_\ell$. Although they satisfy $\tilde O_4\tilde O_3\tilde O_2\tilde O_1=\mathrm{e}$, and their 4-normals satisfy $\tilde{N}_\ell=N_\ell$ by construction, it is not yet clear whether the $\tilde O_\ell$ are \textit{necessarily} equal to the $O_\ell$ (up to global conjugation, i.e. gauge). Demonstrating this is what we mean by showing consistency of the reconstruction. 
We once more proceed constructively, and show that both $\tilde{\hat n}_\ell = \hat n_\ell$ and $\tilde a _\ell = a_\ell$, in the notation of Eqs. \eqref{eq_O} and \eqref{eq_Ohyp}. We will show that the Gram matrix and the closure equation contain all the information needed to completely fix the $\tilde O_\ell$. Because the $\tilde O_\ell$ and the $O_\ell$ have the same Gram matrix we will briefly drop the distinction and omit the tildes.

 First align $\tilde{\hat n}_3$ with $\hat n_3$ by acting with a global rotation (conjugation). A second global rotation around the $\hat n_3$-axis can be used to align $\tilde{\hat n}_1$ with $\hat n_1$; this is always possible because $\tilde{\hat n}_1 . \tilde{\hat n}_3=\tilde\Gr_{13}=\Gr_{13}={\hat n}_1 . {\hat n}_3$. Now the system is completely gauge-fixed and there is no further freedom to rotate the vectors. 
  The vector $\hat n_2$ has a fixed angle with both $\hat n_1$ and $\hat n_3$, determined by $\Gr_{12}$ and $\Gr_{23}$, and there are \textit{a priori} at most two vectors with this property (identified by the intersection of two cones around $\hat n_1$ and $\hat n_3$, respectively). However, only one of those satisfies the additional requirement that $(\hat n_1 \times \hat n_2 ) .\hat n_3 >0$, which was crucially used in the reconstruction.\footnote{Notice, that existence in not in question, since it is guaranteed by construction. Only uniqueness needs an argument.} Similarly,  $\hat n_4$ is also uniquely determined. All that remains then is to show that the entries of the Gram matrix completely fix the areas.
  
  Consider $\Gr_{24}=\hat n_2 . \bO_1 \hat n_4$. Since $\hat n_1$, $\hat n_2$, and $\hat n_4$ are all given, there exists at most two values of $a_1$ (in the interval $(0,2\pi)$) that solve this equation (geometrically this is again the intersection of two cones). The triple product condition $(\hat n_2 \times \hat n_1 ) .\bO_1\hat n_4 >0$ singles out one of these two solutions.  Similarly, one fixes $a_3$ by using the analogous expression $\Gr_{24}=\hat n_2 . \bO_3^{-1} \hat n_4$ and $(\hat n_3 \times \hat n_2 ) .\bO_3^{-1}\hat n_4 >0$. To conclude, we need to show that $a_2$ and $a_4$ are completely determined.
  
Consider the closure equation $O_4' O_3' O_2' O_1' \equiv( O_3^{-1}  O_4 O_3 ) O_2 O_1 O_3= \mathrm{e}$, where we identify $O_4' \equiv O_3^{-1}  O_4 O_3 $, $O_3'\equiv O_2$, and so on. We have completely fixed $O_1$ and $O_3$, as well as $n_2$ and $n_4$. The remaining unknowns are $a_2$ and $a_4$. In the language of the new closure one needs only determine $a'_3$ and $a'_4$. The Gram matrix of the new closure is the same as the previous one if edge $(24)'$ is selected as the new special edge, and is therefore completely known. Following the same construction then we can fix $a'_1$ and $a'_3$, but these are respectively the same as $a_3$ and $a_2$. Therefore we have fixed $a_2$.\footnote{Again we do not discuss existence of these solutions, only their uniqueness, since existence was covered in the proof's first part.} Now that only one variable is left, an explicit use of the closure equation clearly fixes it uniquely, by giving explicit expressions for both $\cos a_4$ and $\sin a_4$.
\end{proof}

Note that in the second part of the theorem, the spherical and the hyperbolic cases (even the two-sheeted one) are treated uniformly. In fact, once the details of the reconstructed tetrahedron are given, one only needs a parallel transport rule (and a path) to write down a closure equation and associate it to a Gram matrix consistent with the reconstruction. This works straightforwardly in each of the cases.

\part{Phase space of shapes}

\section{Curved tetrahedra, spherical polygons, and flat connections on a punctured sphere}\label{sec_relations}
In the first part of this paper, we have shown how four $\SU$ holonomies satisfying a closure constraint (and a non-degeneracy condition) give rise to the geometry of a curved tetrahedron
embedded in either $\mathrm S^3$ or $\mathrm H^3$. This closure admits at least two other interpretations: the non-trivial holonomies of a flat connection on a quadruply punctured 2-sphere satisfy such a closure; and this constraint can also be associated to the four sides of a geodesic polygon embedded in $\mathrm S^3\cong\SU$. The flat-connection viewpoint is important, has attracted much attention in the literature, and is closely connected to the motivations for our work (see \cite{HHKR}).

The moduli space of flat connections on a punctured Riemann surface has a natural phase-space structure \cite{Atiyah1983,Goldman1984,Jeffrey1994} that can be deduced, for example, via gauge-theoretic arguments. In this framework the final, finite-dimensional phase space is obtained after a reduction by the infinite-dimensional gauge symmetries of the initial theory. A completely finite-dimensional approach to the problem was put forward by Anton Alekseev, Yvette Kosmann-Schwarzbach, Anton Malkin, and Eckhard Meinrenken \cite{Alekseev1998,Alekseev2000,Alekseev2002}, who built generalized phase-space structures associated to each puncture and handle of the Riemann surface. These spaces are then ``fused'' together in order to obtain the usual phase-space structure, after a further reduction by a global topological constraint. These generalized structures are well adapted to the polygonal interpretation of the closure constraint, and allow the association of a natural phase-space to the polygons in $\mathrm S^3$ of fixed side lengths. This was the content of the work of Thomas Treloar \cite{Treloar2000}, who generalized the previous constructions of phase spaces of polygons on $\mathrm E^3$ \cite{Kapovich1996} and $\mathrm H^3$ \cite{Kapovich2000} to the compact space $\mathrm S^3$. The novelty of the work of Alekseev and collaborators, which is reflected in the spherical-polygon case, is the fact that one is forced to abandon Poisson structures and to step into the realm of \textit{quasi}-Poisson structures, for which the Jacobi identity is violated by a specific term.

The violation of the Jacobi identity is quite a drastic change, but it cannot be avoided if you are to introduce genuinely group-valued moment maps \cite{Alekseev1998,Alekseev2000,Alekseev2002}. Indeed, in Alekseev and collaborators' framework the topological (closure) constraint is equivalent to fixing the total, group-valued momentum of the system to the identity;
this is closely analogous to the standard procedure of setting the relevant algebra-valued momentum to vanish when it generates gauge transformations. In this language, the generalized closure constraint is better understood as a deformation of the Gau{\ss} constraint of gauge theories, see also the discussion of spin-networks in sections \ref{sec_Intro} and \ref{sec_outlook}. Interestingly, the violation of the Jacobi identity becomes irrelevant after the reduction to the gauge invariant space is performed.

We believe these fundamental ideas about symmetry may provide an important qualitative shift in thinking about the cosmological constant in physics \cite{HHKR}. So, in this part of the paper we present this material as constructively and intuitively as we can and whenever possible connect the mathematical formalism to the physicists' language. Our focus will be on the tetrahedral interpretation of the closure constraint, which is a novel feature of our work, and hence many considerations specific to this interpretation will be put forward. In particular, our interpretation of the phase space we construct is in terms of a phase space of shapes for curved tetrahedra.

A peculiar feature of our construction, seemingly coincidental, is that for $\SU$ it happens that the Jacobi identity is actually satisfied also at the level of a single puncture's generalized phase space.

\section{Quasi-Poisson structure on $\SU$}\label{sec_qpoisson}

Before considering the phase space of curved tetrahedra, we start with the simpler problem of defining a quasi-Poisson structure for each face. This is analogous to the construction of the phase-space structure on the moduli space of flat connections on a punctured sphere out of the quasi-Poisson structures associated to each puncture.

As mentioned in section \ref{sec_Intro}, an important feature of Minkowski's construction in the flat case is that the closure constraint is also the generator of gauge transformations at each node of the spin network, i.e. it is the generator of rotations in the tetrahedral picture. In particular, each flux generates rotations of the associated face vector. We want to reproduce this feature with the curved tetrahedra. To do so  we need to formalize the flat case. 

\paragraph*{Review of the flat case:} The group $\SU$ acts on a three-vector $\vec a\in\mathbb R^3$ via its vectorial (spin 1) representation. This action can be cast as a Hamiltonian action generated by the three-vectors:
\be
\{a^i, f(\vec a)\} = \frac{\D}{\D t} f\left(\E^{-t \bJ^i}\vec a\right) \big|_{t=0} 
\ee
for any function $f:\mathbb R^3 \rightarrow \mathbb R$. Because $(\bJ^i)^l_{\phantom{j}k} = -\epsilon^{il}_{\phantom{il}k}$, one immediately finds
\be
\{a^i, f(\vec a)\} = \epsilon^{il}_{\phantom{il}k} a^k\frac{\partial}{\partial a^l} f\left(\vec a\right) 
\quad\text{or}\quad
\{a^i, \cdot\} = \epsilon^{il}_{\phantom{il}k} a^k\frac{\partial}{\partial a^l}.
\ee
Applying this to the function $f(\vec a) = a^j$ yields 
\be
\{a^i,a^j\} = \epsilon^{ij}_{\phantom{ij}k} a^k.
\label{eq_structuresu2}
\ee
However, it is useful to explore this result from a slightly different perspective. Identify $\mathbb R^3$ with the dual $\su^*$ of the Lie algebra $\su$, via $\vec a \mapsto \alpha := \vec a.\vec \eta$, where $\eta^i\in\su^*$ is dual to the basis $\tau_i\in\su$:
\be
\langle \eta^i,\tau_j\rangle = \delta^i_j \;,\;\text{where}\; [\tau_i,\tau_j] = \epsilon_{ij}^{\phantom{ij}k}\tau_k.
\ee
The action of $\SU$ on $\mathbb R^3$ is mapped into the coadjoint action of $\SU$ on $\su^*$:
\be
\alpha_G := (G\triangleright\vec a).\vec \eta = \vec a. (\Ad^*_{G^{-1}} \vec \eta) =\Ad^*_{G^{-1}}\alpha .
\ee
The vector field $y_{\su^*}$ associated to an infinitesimal transformation is
\be
y_{\su^*}=\langle -\ad^*_y\alpha,\partial_\alpha\rangle = \langle \alpha,\ad_y\partial_\alpha\rangle= \langle \alpha, [y,\tau_l]\rangle\frac{\partial}{\partial a_l} = \epsilon_{il}^{\phantom{il}k}y^i a_k\frac{\partial}{\partial a_l}
\label{eq_coadjvectorfield}
\ee
where $y\in\su$ is the infinitesimal version of $G$, and $\partial_\alpha:=\tau_l\frac{\partial}{\partial a_l}$ is an $\su$-valued vector field on $\su^*$. Hence the Poisson brackets on $\mathbb R^3$ that we wrote above can be now interpreted as Poisson brackets on $\su^*$:
\be
\{\langle \alpha, y\rangle, \cdot\} = y_{\su^*}.
\label{eq_Poissongenerator}
\ee
The meaning of this equation is that the function $\langle \alpha, y\rangle$ on $\su^*$ is the Hamiltonian generator of the coadjoint action in the direction of $y\in\su$ on the space $\su^*$.

Notice that in the latter approach the fact is put to the forefront that the dual $\frak g^*$ of a Lie algebra $\frak g$ carries a canonical Poisson structure induced by the Lie brackets on $\frak g$ itself. This is a classical result due to Alexandr A. Kirillov and Bertram Kostant \cite{Konstant1970,Kirillov1976}.

We introduce some useful nomenclature and notation. Define the Poisson bivector
\be
P = P_{ij}\left(\frac{\partial}{\partial a_i}\otimes\frac{\partial}{\partial a_j}-\frac{\partial}{\partial a_j}\otimes\frac{\partial}{\partial a_i}\right)
\ee
so that
\be
P( \D f, \D g) := \iota(P) (\D f \otimes \D g) := \{f , g\} \quad \forall f,g\in \mathcal C^1(\su^*,\mathbb R),
\ee
where $\iota$ denotes contraction. 
The bivector $P$ can also be interpreted as a map from one-forms to vector fields; for this it is enough to contract it with a single 1-form. When viewing it as this  map we denote it $P^\#$:
\be
P^\# : \Omega^1(\su^*) \rightarrow \mathfrak X(\su^*)\;, \;\D f \mapsto P^\#(\D f) \;\text{such that}\; \iota\left(P^\#(\D f)\right)\D g = P(\D f, \D g),
\ee
where $\Omega^n(M)$ is the space of $n$-forms on a manifold $M$ and $\mathfrak{X}(M)$ is the space of vector fields on $M$. 

Now we can rewrite Eq. \eqref{eq_Poissongenerator} as\footnote{Indicating the inverse of $P^\#$ (possibly after its restriction to an appropriate subspace) by $\omega^\flat$:
\be
\D\langle \alpha, y\rangle = \omega^\flat(\langle -\ad^*_y \alpha, \partial_\alpha \rangle) = \iota({\langle -\ad^*_y \alpha, \partial_\alpha \rangle}) \omega,
\ee
where $\omega\in\Omega^2$. This is a well-known formula in the context of symplectic geometry. In a slightly more general framework it goes under the name of the \textit{moment map condition}.  }
\be
P^\#(\D\langle \alpha, y\rangle ) = y_{\su^*}.
\label{eq_Poissongenerator2}
\ee
This formula tests the vector field generated by a \textit{linear} function $\langle \alpha, y \rangle$ of the \textit{Hamiltonian} generators of the group action $\alpha$. The general case is\footnote{It is actually immediate to show that this condition is equivalent to the previous one by the linearity of $P^\#$. The right hand side of this equation can be written in a coordinate free way as $\langle \alpha, [\partial_\alpha f, \partial_\alpha]\rangle$, where again $\partial_\alpha=\tau_i\frac{\partial}{\partial a_i}$ is an $\su$ valued derivative on $\su^*$.}
\be
P^\#(\D f(\alpha) ) = \frac{\partial f}{\partial a_k} (\tau_k)_{\su^*}\qquad \forall f\in\mathcal C^1(\su^*,\mathbb R).
\label{eq_Poissonsu2}
\ee
and will be useful in generalizing to non-linear spaces of Hamiltonian generators. 

If we are given a transformation to implement on $\su^*$, the right-hand side of this equation is fixed via Eq. \eqref{eq_coadjvectorfield}, while postulating its Hamiltonian generators (the $\alpha$ themselves) fixes the argument of $P^\#$. These two pieces of information, taken together, fix uniquely the Poisson bivector.

\paragraph*{The curved case}  We now adapt this constructive procedure to the curved case. That is, we will deduce the appropriate bracket on the space of generalized $\Su$ area vectors by postulating both the way they transform and the generators of this transformation. In analogy to the flat case, the transformation will act by conjugation and be generated by the $\Su$ area vectors. Important modifications to the flat construction are needed to fully implement this strategy. This will lead us into the subtle realm of \textit{quasi}-Poisson manifolds.

 In the previous sub-section, it was natural to treat the area vectors as elements of $\su^*$. Two steps are needed in order to promote them to elements of $\SU$: identify $\su^*$ with $\su$ in a natural way, and then ``exponentiate'' the result in some manner.

We use the Killing form on $\su$, $K(\cdot,\cdot)$ to implement the first step. Indeed, for any $\alpha\in\su^*$ there exists a unique $x_\alpha \in \su$ such that
\be
\langle \alpha, y \rangle = K(x_\alpha, y) \quad \forall y\in\su.
\ee
Normalize $K$ so that $K(\tau_i,\tau_j)=\delta_{ij}$, then Eq. \eqref{eq_Poissonsu2} is essentially unaltered
\be
P^\#(\D f(x) ) = \frac{\partial f}{\partial x_k} (\tau_k)_{\su}\qquad \forall f\in\mathcal C^1(\su,\mathbb R),
\ee 
except that the coadjoint action is mapped into the adjoint action of $\su$ on itself:
\be
(\tau_k)_{\su} = \langle \eta^i,-\ad_{\tau_k} x \rangle\frac{\partial}{\partial x^i}.
\ee

In order to ``exponentiate'' this result, we need to find a vector field on the group manifod generating the $\SU$-transformations of the face holonomies, i.e. the analogue of $(\tau_k)_{\su} $, and generalize the simple partial derivative of the function $f$ to an appropriate vector field on the non-linear $\SU$ group manifold. The first task is simple, since conjugation of the $\SU$ face holonomies by elements of $\SU$ generalizes the adjoint action of the group on its Lie algebra: 
\be
\Ad_{H^{-1}} x \quad \leadsto \quad \mathrm{AD}_{H^{-1}} G := H^{-1} G H.
\ee
The vector field implementing an infinitesimal transformation is
\be
y_{\su^*}=\langle \eta^i,-\ad_y x \rangle\frac{\partial}{\partial x^i} \quad \leadsto \quad y_{\SU}=y^L - y^R,
\label{eq_ySU2}
\ee
where $y^{R,L}$ are respectively the right- and left-invariant vector fields on $\SU$, with the value $y\in\su\cong \mathrm T_\E\SU$ at the identity. 

More interesting is generalizing the derivative of the function $f$ in the direction associated to a basis element $\tau_k$ of the Lie algebra. There is no unique, natural derivative (vector field) on the group $\SU$ associated with the direction $\tau_k$. This is because the group is non-Abelian and hence non-linear. In particular, derivatives in any direction $y$ can be associated to either left or right translations on the group, translating along $y^R$ and $y^L$, respectively. So, what is the appropriate combination $\hat y$ of these two derivatives? Both $y^R$ and $y^L$ reduce to the usual derivation in the flat (Abelian) limit. Interestingly, the antisymmetry of the Poisson bivector $P$ fixes this ambiguity, selecting
$\hat y = \frac{1}{2}( y^L+y^R)$. Indeed, suppose $\hat y = A y^L + B y^R$, with $A+B=1$ to assure the correct flat limit. Then for all functions $f$:
\begin{align}
0\equiv P(\D f \otimes \D f) &= P^\#(\D f) (\D f) = \sum_k\big(\widehat{\tau_k} f\big)\big((\tau_k)_{\SU} f\big) \notag\\
&= \sum_k \left[A (\tau_k)^L \otimes (\tau_k)^L  -  B (\tau_k)^R \otimes (\tau_k)^R  \right](\D f\otimes \D f) + (B-A) \left[(\tau_k)^L f \cdot (\tau_k)^R f \right]\notag\\
& = (A-B) \left\{\sum_k (\tau_k)^L f \cdot \left[(\tau_k)^L f -  (\tau_k)^R f \right]\right\} 
\quad \Longrightarrow \quad A=B=\frac{1}{2},
\end{align}
where we used the identity $\sum_k (\tau_k)^L\otimes(\tau_k)^L = \sum_k (\tau_k)^R\otimes(\tau_k)^R$. \\

Thus, we have obtained the following condition on the quasi-Poisson bivector $P$ on $\SU$:
\be
P^\#(\D f) = \frac{1}{2}\left[ \left( (\tau_k)^L + (\tau_k)^R \right)f \right] (\tau_k)_{\SU} \qquad \forall f\in\mathcal C^1(\SU, \mathbb R).
\label{eq_PoissonSU2}
\ee
An equivalent condition, analogous to Eq. \eqref{eq_Poissonsu2}, does not explicitly rely on a basis of $\su$. To display this form, we need to introduce the Maurer-Cartan forms of $\SU$. These are 1-forms $\theta^{L,R}$ with values in the Lie algebra $\su$ defined by the equations $\iota(x^{L,R})\theta^{L,R} = x$, $\forall x\in\su$. More conveniently, they can be written (with matrix groups in mind) as
\be
\theta^L \big|_H= H^{-1}\D H \qquad \theta^R\big|_H = \D H H^{-1}.
\ee
Using these formulas we can check that $\D f = \big[(\tau_k)^{L,R} f\big]\theta_k^{L,R}$, where $\theta^{L,R} = \tau_k \theta_k^{L,R} $. Then, by using the identity $x^L=(\Ad_H x)^R $, and substituting $y=\big[(\tau_k)^{R} f\big]\tau_k \in\su$, we obtain:\footnote{Note that $(\tau_k)^L f   = (\Ad_H\tau_k)^R f = (\Ad_H\tau_k)_i (\tau_i)^R f  = (\Ad_{H^{-1}}\tau_i)_k (\tau_i)^R f = [\Ad_{H^{-1}}y]_k$.}
\be
P^\#\big( K(y,\theta^R\big|_H) \big) = \frac{1}{2}\left[ (1+\Ad_{H^{-1}})y \right]_{\SU} \qquad \forall y\in\su.
\label{eq_momentmap}
\ee
From this equation and the non-degeneracy of the Maurer-Cartan forms, it is clear that the quasi-Poisson bivector $P$ has a kernel when $(1+\Ad_{H^{-1}})$ is non-invertible. In the case of $\SU$ this is when $H$ has the form $\exp (\pi \hat n.\vec\tau)$. We will return to this observation briefly.

In order to obtain a completely explicit formula for $P$, we coordinatize the group $\SU$. Coordinates on the Lie algebra are natural and allow comparison with the flat case, in particular, making the flat limit easy to evaluate, so we use the $\{a^{i}\}_{i=1}^3$ as coordinates. In the fundamental representation
\be
\bH = \exp \vec a.\vec \btau = \cos \frac{a}{2} \mathbf 1 - \I \sin \frac{a}{2} \hat n. \vec \bsigma
\ee
and convenient intermediate quantities are
\be
t_H := \Tr(\bH)=2 \cos\frac{a}{2}  \quad \text{and} \quad \vec N_H := \Tr( \bH \vec \btau ) = -\sin\frac{a}{2}\hat n\;.
\ee
By inserting $f(H)=t_H$ in Eq. \eqref{eq_PoissonSU2}, we obtain
\be
P^\#(\D t_H) & = \Tr(\bH \btau_k  )(\tau_k)_{\SU} = -\sin\frac{a}{2}n^k (\tau_k)_{\SU}.
\label{eq_traceflow}
\ee
Now, observe that the action by conjugation of the group on itself exponentiates naturally, becoming an action by conjugation at the level of the Lie algebra. Therefore, the infinitesimal version of the action $\exp {(a^G)^k}\tau_k =: H^G := GHG^{-1} = \exp \vec a. \Ad_G\vec \tau \,$ is, in our coordinates,
\be
(a^G)^j =a^i K(\tau_j, \Ad_G \tau_i )  \quad \leadsto \quad y_{\SU} a^j = a^i K(\tau_j, \ad_y \tau_i) = a^i y^k \epsilon_{jki},
\ee
and thus
\be
y_{\SU} =  a^i y^k \epsilon_{ijk}\frac{\partial}{\partial a^j} .
\label{eq_ySU2}
\ee
Substituting this into the formula for $P^\#(\D t_H)$  and using $\vec a = a \hat n$, one finds
\be
P^\#(\D t_H) = \sin \frac{a}{2} n^k a^i y^k \epsilon_{ijk}\frac{\partial}{\partial a^j} \equiv 0 \quad \Rightarrow \quad P^\#(\D a) \equiv 0.
\ee
This means that $P$ is transverse to the radial coordinate in the coordinate space.

Upon substituting $f(H)=N^i_H$ into Eq. \eqref{eq_PoissonSU2} we find,
\be
P^\#(\D N^i_H) & = \frac{1}{2} \big[\Tr(\bH \btau_k \btau_i ) + \Tr(\btau_k\bH \btau_i )  \big] (\tau_k)_{\SU} = -\frac{1}{4} \Tr(\bH) (\tau_i)_{\SU} = - \frac{1}{2} \cos\frac{a}{2} (\tau_i)_{\SU},
\ee
and from the fact that $P^\#(\D a) =0$ it is then immediate to deduce
\be
P^\#(\D a^k) = \frac{a}{2} \ctg \frac{a}{2} a^i\epsilon_{ijk}\frac{\partial}{\partial a^j}.
\ee
This gives, finally, the quasi-Poisson brackets on the group $\SU$ in terms of the logarithmic coordinates $a^k$:
\be
\big\{a^i, a^j\big\}_{qP} = \frac{a}{2} \ctg \frac{a}{2} \epsilon^{ij}_{\phantom{ij}k} a^k.
\label{eq_structureSU2}
\ee
This expression manifestly shows that the quasi-Poisson bivector is tangent to and non-degenerate on the conjugacy classes of $\Su$. This generalizes the classical result that coadjoint orbits are the symplectic leaves of the dual of the Lie algebra equipped with the canonical Kirillov-Kostant Poisson structure. This is a particular case of a more general statement about foliations of quasi-Poisson manifolds into non-degenerate leaves invariant under the group action \cite{Alekseev2002}.

At this point, one might want to introduce a rescaling of the coordinates $a^i$, to see how the flat limit appears. Consider a homogeneously curved geometry with radius of curvature $r$, then
\be
\bH \quad\leadsto\quad ^r\bH = \exp \frac{\vec a.\vec\btau}{r^2}. 
\ee
Since this is formally obtained by sending $a \mapsto a/r^2$, Eq. \eqref{eq_structureSU2} for $r\neq1$ is
\be
\Big\{ a^i, a^j \Big\}_{qP}^r :=r^{-2} \Big\{ a^i, a^j \Big\}_{qP} = \frac{a}{2r^2}\ctg \frac{a}{2r^2} \epsilon^{ij}_{\phantom{ij}k} a^k \xrightarrow{r\rightarrow\infty} \epsilon^{ij}_{\phantom{ij}k} a^k + O(r^{-2}).
\ee
The rescaling of the quasi-Poisson brackets $\{\cdot,\cdot\}_{qP}\mapsto\{\cdot,\cdot\}_{qP}^r:=r^{-2}\{\cdot,\cdot\}_{qP}$ makes the limit clean and can be achieved by a rescaling of the Killing form appearing in the definition of $\hat y$: $K(\cdot,\cdot)\mapsto K_r(\cdot,\cdot):=r^{2}K(\cdot,\cdot)$. Interpreting the Killing form as a metric on the Lie algebra, this is equivalent to fixing its scale to that of the geometric $\mathrm S^3$ (or $\mathrm H^3$). Notice, however, that this is not a completely obvious feature, since this metric is \textit{a priori} used to measure the lengths of \textit{area} vectors, and not geometrical distances.

The quasi-Poisson structure we have just defined has various interesting features. First of all, even though the theory of group-valued moment maps that leads to Eq. \eqref{eq_PoissonSU2} generically gives quasi-Poisson brackets that \textit{violate} the Jacobi identity, in our case this does not happen. This surprise is because of the choice of group, $\SU$, and is probably not too significant; we are still forced to use genuinely quasi-Poisson spaces. In the next section it will become clear, in particular, that the ``fusion'' of four face phase-spaces cannot be performed by simple tensor product, and needs further care. Other examples of this are: the quasi-symplectic 2-form on the leaves tangent to the quasi-Poisson bivector is not simply given by the inverse of its restriction; and the formula for the quasi-symplectic volume also needs careful corrections, see \autoref{sec_qhamiltonian}. 

\section{Phase space of shapes of curved tetrahedra}\label{sec_tetrahedron_phsp}

The goal of this section is to put together the four quasi-Poisson spaces associated to the faces of a curved tetrahedron, and to subsequently reduce this quasi-Poisson space by the closure constraint $H_4H_3H_2H_1=\E$. Remarkably, the reduced space obtained by ``gluing'' multiple quasi-Poisson spaces is eventually a \textit{symplectic} space. Indeed, it is the moduli space of flat $\SU$-connections on the four-times punctured sphere
equipped with the symplectic 2-form induced by the Atiyah-Bott 2-form \cite{Alekseev1998,Alekseev2000}. The ``gluing'' procedure goes under the name of \textit{fusion}, and is more complicated than in the standard case of Lie-algebra-valued moment-map theory. In the latter context it is enough to juxtapose the two Poisson manifolds each with its Poisson structure and to consider a total moment map given by the sum of the two moment maps. For examaple, in angular momentum theory, the total angular momentum is just the sum of the two angular momenta. For quasi-Poisson manifolds this is no longer possible. The total moment map should be the \textit{product} of the two moment maps, and since this operation is non-linear, one is forced to add a term to the total quasi-Poisson bivector in order to ensure the moment map condition is still satisfied in the fused space; i.e. in order to ensure that the total momentum generates the same gauge transformation on the two subspaces. In other words, a twist is needed to convert a non-linear operation (the product of two momenta) into a linear one (the sum of the two vector fields generating the gauge transformations on each copy of the group). We turn now to making this statement precise.

\paragraph*{Fusion product}
Consider two copies of the group $\SU$, i.e. the total quasi-Poisson space associated to two faces of the tetrahedron; by assumption, we require the total momentum $H_2H_1$ to be the quasi-Hamiltonian generator of gauge transformations, i.e. rigid rotations, in the total space. (Here we have in mind that we eventually want the closure constraint $H_4H_3H_2H_1=\E$ to generate rigid rotations of the full tetrahedron.) Let us now be naive and take as a quasi-Poisson bivector on the total space $P' = P_1 + P_2$, where $P_{1,2}$ are the quasi-Poisson bivectors defined on the first and second copy of $\SU$ respectively, and let us calculate the analogue of the left hand side of Eq. \eqref{eq_momentmap}:\footnote{To see that $\theta^R\big|_{H_2H_1} = \theta^R\big|_{H_2} + \Ad_{H_2}\theta^R\big|_{H_1}$, it is convenient to use $\theta^R\big|_H = \D H H^{-1}$ and apply the Leibniz rule.}
\begin{align}
{P'}^\#\big( K(y,\theta^R\big|_{H_2 H_1}) \big) & =  {P'}^\#\big( K(y,\theta^R\big|_{H_2} + \Ad_{H_2}\theta^R\big|_{H_1}) \big) \notag\\
& =P_2^\#\big( K(y,\theta^R\big|_{H_2}) \big) + P_1^\#\big( K(\Ad_{H_2^{-1}}y,\theta^R\big|_{H_1}) \big)\notag\\
& =  \frac{1}{2}\left[ (1+\Ad_{H_2^{-1}})y \right]_{\SU^{(2)}} + \frac{1}{2}\left[ (1+\Ad_{H_1^{-1}})\Ad_{H_2^{-1}}y \right]_{\SU^{(1)}},
\end{align}
where in the second step we used the linearity of $P_{1,2}$ and the $\Ad$-invariance of $K$, and in the third one we used Eq. \eqref{eq_momentmap}. This transformation has the undesirable property that it  treats the first and the second copies of the group on a different footing. 

To do better, and generalize Eq. \eqref{eq_momentmap}, this formula should involve the adjoint action associated to the product $H_2 H_1$, in both factors on the right-hand side. This is accomplished by introducing the bivector
\be
\psi_{21} := \frac{1}{2} \sum_k (\tau_k)_2 \wedge (\tau_k)_1,
\ee
where $(\tau_k)_\ell := (\tau_k)_{\SU^{(\ell)}}$ indicates the vector field generating the action by conjugation in the direction $\tau_k$ within the $\ell$-th copy of the group. Then, 
\begin{align}
{\psi_{12}}^\#\big( K(y,\theta^R\big|_{H_2 H_1}) \big) & =  {\psi_{12}}^\#\big( K(y,\theta^R\big|_{H_2}) \big) + {\psi_{12}}^\#\big( K(\Ad_{H_2^{-1}}y,\theta^R\big|_{H_1}) \big)\notag\\
& =  \frac{1}{2}\sum_k K\Big(y,\iota\left((\tau_k)_2\right)\theta^R\big|_{H_2} \Big)(\tau_k)_1 - \frac{1}{2}\sum_k K\Big(\Ad_{H_2^{-1}}y,\iota\left((\tau_k)_1\right)\theta^R\big|_{H_1} \Big)(\tau_k)_2\notag\\
& = \frac{1}{2}\Big[(1- \Ad_{H_2^{-1}})y\Big]_{\SU^{(1)}} - \frac{1}{2} \Big[(1- \Ad_{H_1^{-1}})\Ad_{H_2^{-1}}y\Big]_{\SU^{(2)}},
\end{align}
where we used $(\tau_k)_\ell = (\tau_k)_\ell^R  -(\tau_k)_\ell^L$, as well as $y^L = (\Ad_H y )^R$. This calculation shows that the correct ``fused'' quasi-Poisson bivector is
\be
P_{2\oast 1} = P_2 + P_1 + \psi_{21} \;,
\ee
because it satisfies the moment map condition
\be
P_{2\oast1}\big( K(y,\theta^R\big|_{H_2 H_1}) \big) = \frac{1}{2}\left[ (1+\Ad_{(H_2H_1)^{-1}})y \right]_{\SU^{(2)}\times\SU^{(1)}} \;,
\ee
where $y_{\SU^{(2)}\times\SU^{(1)}} = y_{\SU^{(2)}} + y_{\SU^{(1)}}$. 

Note that the fusion procedure brings the \textit{quasi}-Poisson character of these constructions to the forefront. In particular, the quasi-Poisson brackets on the product space do not satisfy the Jacobi identity, and we see that the Jacobi identity on a single copy of $\SU$ only held by a fortunate coincidence, in a sense due to the low dimensionality of the space. To be more specific, the violation of the Jacobi identity is given by:
\be
\phi_{21}:=\frac{1}{12} \epsilon^{ijk} (\tau_i)_{2\times1} \wedge  (\tau_j)_{2\times1} \wedge  (\tau_k)_{2\times1}, \qquad \text{where } (\tau_i)_{2\times1} =  (\tau_i)_2 + (\tau_i)_1 \;.
\label{eq_violationJ}
\ee
Note also, that the fusion is \textit{not} commutative, since $\psi_{12}\neq\psi_{21}$. This reflects the fact that the group product itself is non-commutative and becomes even more apparent when you iterate the process. However, It is a quick check that the fusion product is associative. Then, the total quasi-Poisson space associated to the four faces of the tetrahedron is $\SU^{\otimes4}$ equipped with the quasi-Poisson bivector
\be
P_{\oast4} := P_{4\oast3\oast2\oast1} = P_4 + P_3 + P_2 + P_1 + \psi_{21} + \psi_{31} + \psi_{41} + \psi_{32} + \psi_{42} + \psi_{43} \;.
\ee
This quasi-Poisson bivector violates the Jacobi identity by a term $\phi_{4321}$ generalizing Eq. \eqref{eq_violationJ}.

\paragraph*{Reduction} All that remains is to find the space that results upon reduction by the closure constraint 
\be
H_4H_3H_2H_1=\E \;.
\ee
By construction, the total momentum $H_4\cdots H_1$ generates rigid rotations of the tetrahedron, i.e. the diagonal conjugacy transformation: $H_\ell \mapsto G H_\ell G^{-1} \; \forall \ell$. At the end of  section \ref{sec_qpoisson} we noted that each quasi-Poisson bivector is tangent to all the conjugacy classes of the various $\SU^{(\ell)}$; this fact is unchanged by introducing the $\psi_{\ell m}$. This allows us to restrict attention to the space of shapes of tetrahedra with fixed areas, the area corresponding to the conjugacy class of the group element characterizing that face. The invariance under diagonal conjugation implies  that the coordinates on the reduced space are invariant functions under this action. And, finally, a simple counting shows that the reduced space is generically two-dimensional:
\be
4\times (\dim\SU -1) - \dim \SU_\text{closure} - \dim\SU_\text{gauge} = 2 \;,
\ee
where $(\dim\SU-1)$ is the generic dimension of a conjugacy class in $\SU$, the second term accounts for the closure constraint and the last mods out the transformations generated by this constraint. This is essentially the same counting as in the flat case. 

Then we can coordinatize the reduced phase space by any two independent conjugation-invariant functions of the $H_\ell$ (both distinct from the traces of $H_{\ell}$). The most natural choice seems to be a couple of functions of the type $\{\Tr(\bH_2 \bH_1),\Tr(\bH_4 \bH_3\}$. Had we chosen such coordinates the (quasi-)Poisson bracket between them in the reduced space would simply be the one induced by the quasi-Poisson bivector $P_{\oast 4}$: $\{ \Tr(\bH_2 \bH_1), \Tr(\bH_4 \bH_3)\}_\text{red} \equiv \{\Tr(\bH_2 \bH_1), \Tr(\bH_4 \bH_3)\}_{\oast4}$. Although this procedure for defining the reduced phase space is perfectly admissible, it does not lead to a pair of conjugate variables. To find those, it is more convenient to ask the following question: what is the (quasi-)Hamiltonian flow generated by $\Tr(\bH_2 \bH_1)$? Since the reduced space is two dimensional, the answer to this  question will immediately reveal the conjugate variable to $\Tr(\bH_2 \bH_1)$, in the form of the flow parameter.

For notational convenience let us introduce
\be
\Delta_{21}:= \Tr(\bH_2 \bH_1) = 2\cos\frac{a_2}{2}\cos\frac{a_1}{2} - 2\sin\frac{a_2}{2}\sin\frac{a_1}{2} \hat n_2 . \hat n_1.
\ee
The spherical law of cosines, Eq. \eqref{eq_sphcosinv}, immediately yields the interpretation of $\Delta_{21}$; consider the spherical triangle of $\mathrm S^3$ defined by the length of two of its edges, $a_1/2$ and $a_2/2$ respectively, and the angle subtended by them, $\arccos(\hat n_2 . \hat n_1)$. Then, the third side has a length given by $A_{21}:=\arccos(\Delta_{21}/2)$. The same construction can be repeated for $\Delta_{43}$ using side lengths given by $a_3/2$ and $a_4/2$. The closure constraint ensures that $\Delta_{21}$ and $\Delta_{43}$ have the same length and hence that the two triangles can be glued along the corresponding sides to obtain a closed polygon in $\mathrm S^3$. The angle between the two ``wings'' of the polygon can be fixed by calculating $\Delta_{14}=\Delta_{32}$, which fixes the distance between the other two vertices of the polygon.\footnote{In this way we have obtained a spherical tetrahedron in $\mathrm S^3$. This tetrahedron is in a sense ``dual'' to the one we described in the first part of the paper, its sides' lengths are equal to the areas of the that tetrahedron, which, in contrast, can be either spherical or hyperbolical. A more direct way of identifying this tetrahedron is via the identification of $\SU$ and $\mathrm S^3$: the position of the vertices are then  given by $\{\E,H_1,(H_2H_1),H_4\}\subset\SU\cong\mathrm S^3$.} 

Returning to the calculation of the flow generated by $\Delta_{21}$ we have:
\begin{align}
P_\text{red}^\#(\D\Delta_{21}) &\equiv P_{\oast4}^\# (\D\Delta_{21}) \notag\\
&= P_{2\oast1}^\# (\D\Delta_{21})+P_{4\oast3}^\# (\D\Delta_{21})+(\psi_{31}+\psi_{32})^\# (\D\Delta_{21}) + (\psi_{41} + \psi_{42})^\# (\D\Delta_{21}),
\label{eq_PdDelta}
\end{align}
where in the second equality we have grouped the terms in a convenient way. By construction $\Delta_{21}$ is the trace of the total momentum associated to the quasi-Poisson space $\SU^{(2)}\times \SU^{(1)}$. Calculating along the lines of Eqs. \eqref{eq_PoissonSU2} and \eqref{eq_traceflow}, we obtain
\be
P_{2\oast1}^\# (\D\Delta_{21}) = \Tr(\bH_2\bH_1 \btau_k)(\tau_k)_{2\times1} = -\sin\frac{ A_{21}}{2}\; \hat n_{21}^k(\tau_k)_{2\times1}\;,
\ee
where $A_{21}$ and $\hat n_{21}$ are defined by $H_2H_1 = \exp A_{21}\hat n.\vec \tau$ and again $(\tau_k)_{2\times1}=(\tau_i)_2 + (\tau_i)_1$. Meanwhile, the second term in Eq. \eqref{eq_PdDelta} vanishes immediately due to the mismatched dependencies. But, what about the last two terms? For definiteness, let us focus on the first one. Both of its sub-terms must clearly be proportional to $(\tau_k)_3$. However,  since $(\tau_k)_\ell$ is by definition the generator of conjugations of $H_\ell$ in direction $\tau_k$ and $\Tr[(\textbf{G}\bH_2\textbf{G}^{-1})\bH_1] = \Tr[\bH_2(\textbf{G}^{-1}\bH_1\textbf{G})]$ we see that $\iota\big((\tau_k)_2\big)\D\Delta_{21} = - \iota\big((\tau_k)_1\big)\D\Delta_{21}$ and this term as a whole vanishes. The final result of this computation is then
\be
P_\text{red}^\#(\D\Delta_{21}) =  -\sin\frac{ A_{21}}{2}\; \hat n_{21}^k(\tau_k)_{2\times1}\;,
\ee
which by simple derivation of the explicit expression for  $\Delta_{21}$ can also be written as
\be
P_\text{red}^\#(\D A_{21}) =   \hat n_{21}^k(\tau_k)_{2\times1}\;.
\ee

This expression has an interesting geometrical interpretation: the length of the diagonal $(21)$ of the spherical polygon generates a Hamiltonian flow that rigidly rotates  the sides 1 and 2 of the polygon around itself and leaves the sides 3 and 4 fixed. The natural parameter of this flow is the angle $\varphi_{21}$ between the wings $(21)$ and $(34)$ of the polygon hinged by the diagonal $(21)=(43)$:
\be
P_\text{red}^\#(\D A_{21}) = \frac{\partial}{\partial \varphi_{21}}\;.
\ee 
Once expressed this way, it is also clear that this is a gauge invariant statement that makes perfect sense in the reduced space where there is no difference between the diagonal $(21)$ and $(43)$. This Hamiltonian flow is the simplest instantiation of a \textit{bending flow} \cite{Kapovich1996,Kapovich2000,Treloar2000}. Locally the same result holds in the phase space of flat tetrahedra \cite{Baez1999, Bianchi2011a, Bianchi2012b}. Notice however, that the global structure of the Poisson space is very different: in particular, the interval in which $A_{21}$ lives at fixed (large enough) $a_1$ and $a_2$ is generically modified by the compact nature of the three-sphere. This has important consequences for the quantization of this space.

The Poisson bivector
\be
P_\text{red} = \frac{\partial}{\partial A_{21}} \wedge \frac{\partial}{\partial \varphi_{21}}  \qquad \Leftrightarrow  \qquad \big\{A_{21},\varphi_{21}\big\}_\text{red} = 1
\ee
is a completely standard Poisson structure with no trace of quasi structure. This is not a coincidence, since it is a general feature of the reduced phase spaces of this kind that they are Poisson (in fact, symplectic) spaces. Taking our concrete case as an example, this can be understood from the expression of $\phi_{4321}$, the term encoding the violation of the Jacobi identity in the total quasi-Poisson space before reduction. This is a tri-vector composed of terms generating the diagonal conjugacy transformation in the four copies of $\SU$. However, the reduced space is obtained precisely by requiring invariance under such transformations.

The symplectic coordinates $(A_{21},\varphi_{21})$ relate to the complex Fenchel-Nielsen (FN) coordinates $(x,y)$ of flat connections on a four-punctured sphere \cite{kabaya}, satisfying $\{\ln x,\ln y\}=1$. The complex FN length variable $x$ is the eigenvalue of the holonomy along a loop encircling two punctures, i.e. the eigenvalue of $H_2H_1$. Hence, $x^2=\exp(-\I A_{21})$. On the other hand, $\varphi_{21}$ is the conjugate variable, and is therefore related to the (logarithmic) complex FN twist variable $\ln y$ up to a certain function of $A_{21}$.

As a last remark, we point out an explicit expression for $\varphi_{21}$:
\be
\varphi_{21} = \arccos\left( 
\frac{\hat n_1 \times \hat n_{21}}{|\hat n_1 \times \hat n_{21}|} \cdot \frac{\hat n_4 \times \hat n_{21}}{|\hat n_4 \times \hat n_{21}|}
\right).
\ee

\section{Quasi-Hamiltonian approach}\label{sec_qhamiltonian}

In this section we give a very brief account of the quasi-Hamiltonian approach to the phase space of shapes. Our main goal is to calculate the quasi-symplectic volume (area) of the leaves. In this formulation one is forced to work directly at the level of the conjugacy classes, i.e. on the leaves. This is because the quasi-symplectic two form is in a sense the inverse of the quasi-Poisson bivector, and as such can't have any degenerate direction. This statement would apply precisely in the standard symplectic case corresponding to the flat limit in which the group elements are substituted by Lie-algebra elements. However, as is often the case, in the quasi setting there are important twists to the original definitions. 

We first recall the standard symplectic structure on the coadjoint orbits in $\fg^*$. The coadjoint orbit $\mathcal O_\alpha$ of an element $\alpha\in\su^*$ is defined as
\be
\mathcal O_\alpha =\left\{ \beta \in \su^* \, | \, \exists G \in \SU \;\text{with }\; \beta = \Ad_{G^{-1}}^*\alpha \right\}.
\ee
This set carries a canonical, closed, non-degenerate 2-form $\omega_{ \alpha }$ defined by
\be
\omega_\alpha(y_{\mathcal O_\alpha},z_{\mathcal O_\alpha}) = \langle \alpha, [y,z] \rangle \qquad \forall y,z\in\su,
\ee
where on the left-hand side $y_{\mathcal O_\alpha},z_{\mathcal O_\alpha} \in \mathfrak X(\su^*)$ are the vector fields generating the coadjoint action  in the directions $y,z\in\su$, respectively. To show that this form is closed, introduce the symbol $\text{cyclic}_{x,y,z}\{ \cdot\}$ for a summation on cyclic permutations of the elements $x,y,z$, and calculate
\begin{align}
\D \omega_\alpha(x_{\mathcal O_\alpha},y_{\mathcal O_\alpha},z_{\mathcal O_\alpha}) &= 
\text{cyclic}_{x,y,z}\left\{ x_{\mathcal O_\alpha} \omega_\alpha(y_{\mathcal O_\alpha},z_{\mathcal O_\alpha})  \right\} -
\text{cyclic}_{x,y,z}\left\{ \omega_\alpha ([x_{\mathcal O_\alpha},y_{\mathcal O_\alpha}], z_{\mathcal O_\alpha}) \right\} \notag\\
%
%
&=2\text{cyclic}_{x,y,z}\big\{ \langle \alpha, [x,[y,z]] \rangle \big\} \equiv 0,
\end{align}
with the last expression vanishing due to the Jacobi identity on $\su$. In this setting, the moment map condition corresponding to Eqs. \eqref{eq_Poissongenerator} and \eqref{eq_Poissongenerator2} is:
\be
\iota(y_{\mathcal O_\alpha}) \omega_\alpha =  \langle \D\alpha, y\rangle.
\ee

We want to generalize this equation to the case where the variables are in the group instead of in the (dual of the) Lie algebra. To this end, define 
\be
\mathcal O_H =\left\{ H' \in \SU\, |\, \exists G \in \SU \;\text{s.t.}\; H'= G H G^{-1} \right\} \;,
\ee
and for any $y\in\su$ define $y_{\mathcal O_H}$ as the vector field that generates the conjugation action  on $\mathcal O_H$ in the direction $y$ (these are simply the restriction to $\mathcal O_H\subset \SU$ of the $y_{\SU}=y^R-y^L$ defined above). Then, the generalization of the quasi-Hamiltonian moment map condition reads
\be
\iota(y_{\mathcal O_H})\omega_H = \frac{1}{2} K\left(\theta^L + \theta^R\big|_H  , y \right)\;,
\ee
where we have denoted the quasi-symplectic two form on the conjugacy class $\mathcal O_H \subset \SU$ by $\omega_H$ and in this section the $\big|_H$ is used as shorthand for the pullback  of the Maurer-Cartan forms to  $\mathcal O_H$. This formula can be justified very similarly to its counterpart in the quasi-Poisson construction: the 1-form $\D \alpha$ is substituted by the a particular combination of left and right Marurer-Cartan forms, i.e. $\frac{1}{2}(\theta^R + \theta^L)$, that is compatible with the antisymmetry of $\omega_H$:
\be
\iota(y_{\mathcal O_H}\otimes y_{\mathcal O_H})\omega_H = \frac{1}{2} K\left(\iota(y_{\mathcal O_H})(\theta^L + \theta^R)\big|_H  , y \right) \equiv 0.
\ee
One can show, see \cite{Alekseev1998}, that the above moment map condition implies that the form $\omega_H$ is not closed:
\be
\D \omega_H = - \frac{1}{12} K\left(\theta^L,[\theta^L,\theta^L]\right)\big|_H \;,
\ee
and therefore it is \textit{not} symplectic. Moreover, it has a kernel on the equatorial region of $\SU$:
\be
\mathrm{ker} \omega_H = \left\{ y_{\mathcal  O_H}\, | \, y\in\su \text{ and } y\in \mathrm{ker}(\Ad_H +1) \right\} .
\ee
This is analogous to the presence of a kernel for the quasi-Poisson bivector $P$. Notice that the two are \textit{not} inverses of one another, as in the standard symplectic case,
and the relation between them is more involved. We refer to \cite{Alekseev2002} for details, but for completeness we provide the inversion formula
\be
P_{\mathcal O_H}^\# \circ \omega_H^\flat = \mathrm{Id}_{\mathrm T \mathcal O_H} - \frac{1}{4} (\tau_k)_{\mathcal O_H} \otimes (\theta^L_k - \theta^R_k) \big|_H
\ee
where $\theta^L_k = K(\theta^L,\tau_k)$. 

One can also perform the fusion of two quasi-symplectic spaces, and again this procedure needs a twist, which reads:
\be
\omega_{2\oast1} = \omega_2 + \omega_1 + \frac{1}{2} \sum_k \left(\theta^L_k\right)_2\wedge\left(\theta^R_k\right)_1\;.
\ee
Quasi-Hamiltonian reduction is possible as well, and leads to a standard symplectic space, much as reduction in the quasi-Poisson setting did. 

Finally, we want to mention that it is possible to associate a volume to the quasi-symplectic spaces $\mathcal O_H$, which can eventually be used to calculate the volume of the reduced space, leading to the expected result; that is, to Witten's formula for the symplectic volume of the moduli space of flat connections on a Riemann surface, see \cite{Alekseev2002b} and references therein. We do not go into this topic in any detail, but simply calculate the volume of  a single leaf. For this we need an explicit expression for $\omega_H$. To calculate this, we turn to the moment map condition, and contract it with another vector field $z_{\mathcal O_H}$:
\begin{align}
\iota(y_{\mathcal O_H}\wedge z_{\mathcal O_H})\omega_H & = \epsilon^i_{\phantom{i}jk} y^j a^k \epsilon^l_{\phantom{l}mn} z^m a^n \iota\left( \frac{\partial}{\partial a^i}\wedge\frac{\partial}{\partial a^l} \right)\omega_H\notag\\
& = \frac{\sin a}{a} \epsilon_{pqr}y^p z^q a^r,
\end{align}
where we have again parametrized $\SU$ by $H=\exp \vec a.\vec\tau$ with $\vec a = a \hat n$ and in the first line we used the explicit expression of $y_{\SU}$ obtained in Eq. \eqref{eq_ySU2}. In the second line we used the following explicit formula for $\frac{1}{2}(\theta^L+\theta^R)$:
\be
\frac{1}{2}\left(\btheta^L + \btheta^R\right)\big|_H = \left[ \frac{\sin a}{a} \delta^i_j + \frac{a - \sin a}{a} n^i n_j\right] \btau_i \D a^j\;.
\ee
Because $y$ and $z\in\su$ are arbitrary, it follows that:
\be
\omega_H = \frac{\sin a}{a} a^k\epsilon_{kij} \;\D a^i \wedge \D a^j \;.
\ee
This formula should be understood as restricted to the conjugacy class of $H$, that is to the sphere of radius $a$ within the coordinate space $\{\vec a\}$. This is consistent, since the 2-form of the previous formula is tangent to these spheres, in the sense that it vanishes when contracted in the radial direction: $\iota(\partial/\partial a)\omega_H \equiv 0$. Anyway, to make this fact completely explicit, it suffices to recognize that on the 2-sphere of radius $a$, $\mathcal O_a$, the quasi-symplectic 2-form is just
\be
\omega_a = \sin a \;\D^2\Omega,
\ee
where $\D^2\Omega$ is the homogeneous measure on the unit 2-sphere. This can be compared with the symplectic form on the $\su^*$ coadjoint orbits $\omega^{\su^*}_a = a\, \D^2\Omega$. Notice that from this formula it is evident that $\omega_a$ happens to be closed. This is the quasi-symplectic version of the fact that $P_{\SU}$ happens to satisfy the Jacobi identity. Here it is even more clear that this happens for purely dimensional reason: the leaves $\mathcal O_a$ are 2-dimensional and therefore $\omega_a$ is already a top-dimensional form. This would not happen for other groups, nor in the fusion space of two or more leaves. Also, note that $\omega_a$ vanishes at $a=\pi$, i.e. exactly were the operator $(1+\Ad_H)$ has a kernel.

In the theory of quasi-symplectic spaces, the generalization of the Liouville form $\mathcal L$ has an extra term in order to assure that $\mathcal L\neq 0$ everywhere. This generalization can be used to calculate the symplectic volume of the moduli space of flat connections (Witten's formulas) \cite{Alekseev2002b}. The generalized expression of $\mathcal L$ in our context is
\be
\mathcal L_H = \frac{\omega_H}{\sqrt{ \det\left( \frac{1 + \Ad_H}{2}\right) }}.
\ee
To calculate the determinant we observe that the adjoint action of $H\in\SU$ on its Lie algebra is essentially an action by rotation around the axis $\hat n$ by  an angle $a$. Moreover, the determinant is invariant under conjugations of its argument, and therefore the axis $\hat n$ can be fixed to the $\hat z$-axis. In this way
\be
\det\left( \frac{1 + \Ad_H}{2}\right) = \det\left( \frac{1 + R_z(a)}{2}\right) \qquad \text{where}\qquad
 R_z(a)=\left(
\begin{array}{ccc}
\cos a & -\sin a & 0 \\
\sin a & \cos a & 0 \\
 0 & 0 & 1
 \end{array}
 \right).
\ee
This immediately gives
\be
\mathcal L_H  = 2\sin\frac{a}{2}\; \D^2\Omega\;.
\ee
Notice that in the study of the coadjoint orbits the Liouville measure is identical to $\omega_H^{\su^*}$ with no extra factor needed. Therefore, this formula should be compared to $\mathcal L^{\su^*}_\alpha = a \,\D^2\Omega$. Wonderfully, $\mathcal L_H$ is totally regular at $a=\pi$, i.e. where $\omega_a$ was found to be vanishing. The volume form $\mathcal L_H$ vanishes only at $a=2\pi$, precisely on the only non-trivial central element of $\SU$, where the orbit $\mathcal O_{H=-\E}$ reduces to a point. The expression for $\mathcal L_H$ most clearly displays the compact nature of the area-vector spaces. This compactness has important consequences for the quantization of these systems. See section \ref{sec_outlook} for a brief discussion of this. 

To conclude, we provide expressions for $\omega_H$ and $\mathcal L_H$ explicitly displaying the radius of curvature $r$:
\be
\omega_H = r^2 \sin \frac{a}{r^2}\;\D^2\Omega \qquad \text{and} \qquad \mathcal L_H = 2 r^2 \sin \frac{a}{2r^2} \;\D^2\Omega.
\ee
It is clear that in the limit $r\rightarrow \infty$ one recovers  $\omega^{\su^*}_a$ and $\mathcal L^{\su^*}_\alpha$, with no residual  dependence on $r$.

\section{Summary}\label{sec_summary}

Minkowski's theorem establishes a one-to-one correspondence between closed non-planar polygons in $\rmE^3$ and convex polyhedra, via the interpretation of the vectors defining the sides of the polygon as area vectors for the polygon. Extending this theorem to curved polyhedra is non-trivial. We have proven the first generalization, to the best of our knowlege, of Minkowski's theorem for curved tetrahedra. Our techniques, make it possible, at least in principle, to extend the result to more general polyhedra. Our theorem establishes a correspondence between non-planar, geodesic quadrilaterals  in $\rmS^3$, encoded in four $\SO$ group elements $\{O_\ell\}$ whose product is the group identity, and flatly embedded tetrahedra in either $\rmS^3$ or $\rmH^3$. This correspondence depends on the choice of a (non-canonical) isomorphism between the fundamental groups of the four-punctured two-sphere and the tetrahedron's one-skeleton. Finally, we used our theorem to reinterpret the Kapovich-Millson-Treloar symplectic structure of closed polygons on a homogeneous space (with fixed side lengths and up to global isometries), as the symplectic structure on the space of shapes of curved tetrahedra (with fixed face areas and up to global isometries). 

In the context of the phase space construction, it was important to lift the $\{O_\ell\}$ to elements $\{H_{\ell}\}$ of $\Su$. Because $\Su$ is a double cover of $\SO$, a given tetrahedron is not in one-to-one correspondence with four $\Su$ group elements multiplying to the identity. Nonetheless, bijectivity can be recovered if one decorates the sides of the tetrahedron with plus and minus signs. These can be thought of as relative orientations of the reference frames at the various vertices of the tetrahedron, and are corrections that a spinor would be sensitive to. In this sense, these signs are the extra structure one would expect to need to have a full description of a discrete spinorial geometry. A difficulty that arises in this context is that there is no way, in general, to extend a spin structure from the one-skeleton of the tetrahedron to the full ambient space consistently. Nevertheless, it is intriguing that the lift from a Levi-Civita to a spin connection is required to effectively treat the symplectic nature of a tetrahedron's shape.

The geometrical construction investigated here also gave rise to a couple of other unexpected features. First of all, one does not need two distinguished frameworks to deal with spherical and hyperbolic geometries, as in two dimensions \cite{Bonzom2014a,Dupuis2014,Charles2015}. On the contrary, four $\SO$ group elements that close encompass both scenarios. This can be interpreted as follows: The four $\SO$ group elements are precisely four Levi-Civita parallel transport holonomies that an observer might measure by following a (topological) tetrahedron's one-skeleton in some general Riemaniann space. If so, these four holonomies are all the observer knows about that region of space; what is the best approximation she can give of the geometry of that region with the information at her disposal? We claim this is the tetrahedral geometry our theorem allows her to reconstruct. With this picture in mind, the choice of isomorphism between the fundamental groups of a four-punctured sphere and the tetrahedron's one skeleton is not an extra ingredient, but simply a datum arising in her experimental setting. There are, however, situations in which the previous picture fails to be viable. This is where the second unexpected feature of the theorem comes in: at times one encounters geometries which are hyperbolic and contain nevertheless triangles of area larger than $\pi$. This forced the introduction of a new type of hyperbolic triangle (and more generally, simplices) extending across the two sheets of a two-sheeted hyperoboloid. These triangles have an infinite metrical area, but finite holonomy area.

The non-commutative nature of the generalized area vectors $\{O_\ell\}$, as well as the compactness of their domain of definition, led us to consider the quasi-Poisson manifolds of Anton Alekseev and collaborators. Quasi-Poisson manifolds generalize Poisson manifolds by allowing for group-valued moment maps and a (controlled) failure of the Jacobi identity. The group-valued moment map was particularly valuable in the present work where it allowed us to generalize the fact that the closure constraint generates rigid rotations of the polyhedron to the curved context. Analogues of these facts were known in the context of the construction of the symplectic form on the moduli space of flat connections on a Riemann surface. Indeed, a finite dimensional derivation of this symplectic structure was one of the main motivations behind the work of Alekseev and collaborators. We have provided an interpretation of these results that allows for new connections between the study of flat connections on Riemann surfaces, deformed spin-networks for quantum gravity, and discrete three-dimensional curved geometries.

\section{Outlook}\label{sec_outlook}

We have already applied the results of this paper to the construction of a spinfoam model for four-dimensional quantum gravity with cosmological constant \cite{HHKR}. The tetrahedra described here constitute the boundary states of the model, thus this model provides a physical motivation for studying their symplectic structure. The theorem presented in this paper also serves as the foundation for the reconstruction of the (semiclassical) geometry of a curved four-simplex considered in that work. There the tetrahedron's closure relation stems from a flatness condition for the holonomies living in $\rmS^3\setminus \Gamma_5$, where $\Gamma_5$ is the four-simplex one-skeleton. Interestingly, the four-simplex does not have an analogue of the closure constraint, but instead a new set of spacetime holonomies encodes how to glue the five tetrahedra into a simplex.  For more details we refer to the cited work.

As the application to spinfoams shows, it is interesting to seek extensions of our work not only towards more general polyhedra, but in particular to a generic triangulated manifold. The most natural setting for this is that of (discrete) twisted geometries \cite{Freidel2010twist,Freidel2010twist2}. In a twisted geometry, the face shared by two polyhedra has a well defined area, but not a well defined shape, since this depends on which side it is viewed from. These geometries are a classical interpretation of spin-networks states. The spin-network is a graph colored by an $\Su$ irrep (spin) on each link and by $\Su$ intertwiners on the vertices, much like in lattice gauge theory. Each vertex is interpreted as a polyhedron, with as many faces as coincident links. These polyhedra are described by the intertwiner quantum number, and their faces carry areas encoded by the spins associated to the links. In the dual representation, one can associate to each spin-network a wave function which depends on one $\Su$ group element per link and which is invariant under $\Su$ transformations at each vertex of the graph (see \cite{Rovelli2007}). 

In this representation the group elements are interpreted as the parallel transports between the reference frames of two adjacent polyhedra, while the $\su$ generators of the $\Su$ transformation at the end point of each link are interpreted as the area vectors of the polyhedron sitting at the given vertex. Holonomies and area vectors, can be packaged into a natural symplectic structure at every (half-)link, that of $\mathrm{T}^*\Su$.  Interestingly this symplectic structure is  induced by that of general relativity when canonically quantized in Ashtekar's variables \cite{Thiemann2004}.

What our work suggests is to generalize this construction to curved, classical and quantum, twisted geoemetries. For this, one has to consider many curved tetrahedra, connected one to another by pathes colored by $\Su$ holonomies. The holonomies considered in this paper rather play the r\^ole of the area vectors, or fluxes in the spin-network parlance. This way, one ends up studying the double $\Su\times\Su$ as the pertinent generalization of $\mathrm{T}^*\Su$. Again, this space carries a natural quasi-Poisson structure, but no symplectic structure (a consequence of the triviality of its second deRahm cohomology), which was introduced by Alekseev to provide a finite dimensional construction of the moduli space of flat connections on a Riemann surface. In our context, the relevant Riemann surface is that of the thickened spin-network graph. Quantization of these deformed twisted geometries requires representations of the quantum (in the algebraic sense) objects associated to quasi-bialgebras, which are the infinitesimal analogues of the double group $\Su\times\Su$. These quantum objects are built out of a quasi-Hopf  analogue of the appropriate qunatum Lie algebra. Their representations are known to be related to those of a quantum group evaluated at a root of unity \cite{Chari1995}. The study of quantum twisted geometries is work in progress.

The phase space's compactness and the associated emergence of quantum group representations at the root of unity, which allow only a finite total number of states, have two compelling consequences: the first of these is that geometrical observables, such as the volume of a curved tetrahedron, have discrete and bounded spectra. The second is that spinfoam models built out of quantum group representations must be finite, in the sense that they have no bubble divergences \cite{han20114,fairbairn2012}. Unfortunately, we do not yet know how to make this precise in the context of our spinfoam model \cite{HHKR}, which is defined somewhat formally in terms of a complex Chern-Simons theory. The finiteness of a spinfoam model does not mean that bubbles are not potentially associated with large amplitudes (scaling with powers of the inverse cosmological constant) and may require a renormalization procedure, see \cite{Riello2013}.  From a physical perspective the Planck scale regularizes ultraviolet divergences, while the cosmological scale regularizes infrared divergences. 

Finally, an interesting application of deformed spin-networks would be to introduce a more robust coarse graining procedure for spin-networks. Indeed, the deformed networks carry local curvature at their vertices, and therefore could be used to deal with the failure of gauge invariance that standard coarse-grained spin-networks suffer from \cite{Livine2014coarse,Dittrich2015}.

\acknowledgments

The authors wish to thank Wojciech Kami\'nski for enthusiastic discussions on the geometrical features of the reconstruction during the early days of this project. \\

HMH acknowledges support from the National Science Foundation (NSF) International Research Fellowship Program (IRFP) under Grant No. OISE-1159218. MH acknowledges funding received from the People Programme (Marie Curie Actions) of the European Union’s Seventh Framework Programme (FP7/2007-2013) under REA grant agreement No.   298786.   MH also acknowledges the funding received from Alexander von Humboldt Foundation.  HMH and AR were supported by Perimeter Institute for Theoretical Physics.
Research at Perimeter Institute is supported by the Government of Canada through Industry Canada and by the Province of Ontario through the Ministry of Research and Innovation.

%
%
%
%
%
%
%
%
%
%
%

\bibliographystyle{bibstyle_aldo}
\bibliography{0CurvedTet.bib}

\end{document}